\newcommand{\flatbush}{\texttt{Flatbush}}
\newcommand{\splitp}{\texttt{SplitP}}
\begin{document}
\title{\vspace{-15mm}
Evaluation of the relative performance of the subflattenings method for phylogenetic inference}
\author{Joshua Stevenson, Barbara Holland, Michael Charleston, Jeremy Sumner}
\maketitle
\begin{abstract}
    The algebraic properties of \textit{flattenings} and \textit{subflattenings} provide direct methods for identifying edges in the true phylogeny---and by extension the complete tree---using pattern counts from a sequence alignment. The relatively small number of possible internal edges among a set of taxa (compared to the number of binary trees) makes these methods attractive, however more could be done to evaluate their effectiveness for inferring phylogenetic trees. This is the case particularly for subflattenings, and our work makes progress in this area. We introduce software for constructing and evaluating subflattenings for splits, utilising a number of methods to make computing subflattenings more tractable. We then present the results of simulations we have performed in order to compare the effectiveness of subflattenings to that of flattenings in terms of split score distributions, and susceptibility to possible biases. We find that subflattenings perform similarly to flattenings in terms of the distribution of split scores on the trees we examined, but may be less affected by bias arising from both split size/balance and long branch attraction. These insights are useful for developing effective algorithms to utilise these tools for the purpose of inferring phylogenetic trees.
\end{abstract}

{\small \tableofcontents}

\section{Introduction}

The accurate inference of species trees from aligned sequence data is a key task that underpins much of evolutionary biology. 
However, it is a very challenging statistical problem. 
Much of this difficulty is because sequence evolution is a highly heterogeneous process. 
The substitution process can vary in different parts of the evolutionary tree, and also in different parts of the genome. 
In addition to this, the local gene tree along which sequences evolve may vary across the alignment. 

There have been two broad approaches in trying to cope with this complexity. 
The first approach is to model it; likelihood-based and Bayesian approaches have become increasingly parameter rich as they use partitions and/or mixtures to account for heterogeneity. 
The second approach has been to develop methods that are robust to this kind of complexity and avoid having to estimate what are essentially large numbers of nuisance parameters. 
For example, the \textit{logDet} method (see \citepair{Lockhart1994}) allows for consistent inference under the General Markov Model (GMM) without having to fit any parameters.
Methods based on \textit{phylogenetic invariants} also fall into this class of approaches. 
While invariant methods were not initially competitive compared to maximum likelihood \cite{Huelsenbeck1993}, there has been a renaissance in their use over the last decade or so. Explicit use of algebraic methods in phylogenetics started in the late 1980s with both the \textit{Hadamard conjugation} ideas of Hendy and Penny \cite{Hendy1989} and the discovery of \textit{phylogenetic invariants} by Cavender, Felsenstein and Lake \cite{Cavender1987,Lake1987}. These initial ideas exploited the underlying symmetries inherent in both the relevant molecular substitution models, e.g, the Kimura 3ST model \cite{kimura1981estimation}, and the phylogenetic tree itself (leaf permutation symmetries). Importantly, the theory of phylogenetic invariants eventually led to the comprehensive application of methods from algebraic geometry, beginning with the work of \citeauthor{Allman2004} \cite{Allman2004,Allman2007}. A breakthrough paper by \citeauthor{Allman2008} in \citeyear{Allman2008} \cite{Allman2008} described a new class of invariants based on the idea of a (tensor) \textit{flattening} and the rank properties of matrices of site pattern probabilities.

The \textit{flattening} construction is quite natural to phylogenetics as it corresponds to identifying evolutionary splits (partitions of the set of taxa into two parts, akin to an edge in a tree) by reinterpreting a multiple sequence alignment to consist of two sequences of expanded characters (one sequence for each side of the split).
For a split $A|B$ (on a DNA alignment of $N=|A|+|B|$ sequences), the sequence alignment can be summarised into a $4^{|A|}\times 4^{|B|}$ \emph{matrix} with entry $I,J$ being the count of alignment locations with subpattern $I$ on the left side of the split and subpattern $J$ on the right.
The biological intuition: if one takes a split corresponding to an edge in the true tree, the probability mass contained in the flattening matrix will be concentrated on certain off-diagonal entries, with the converse holding for splits not corresponding to edges in the tree.
This information can then be detected by measuring the (approximate) algebraic rank of the flattening matrix, which, in particular, is typically done using the singular value decomposition (\Cref{sec:background}).
Thus, the flattening matrices provide a very direct means of detecting phylogenetic signal and identifying edges in the tree.

Subsequent papers by \citepair{Chifman2015} and \citepair{Gaither2016} observed the flattening rank properties in practice, using sequence data generated under the multi-species coalescent and with a distribution of rates across sites. 

Using a different perspective and starting point, the work by \citeauthor{sumner2005entanglement} \cite{sumner2005entanglement, Sumner2008Markov} on \textit{Markov invariants} has also taken an algebraic approach to phylogenetics but has instead concentrated on the time evolution of the underlying Markov process. In particular, \citepair{sumner2005entanglement} detail a particular set of Markov invariants, the \textit{squangles}, which can be used for quartet reconstruction. We use squangles as a point of comparison alongside flattening and subflattening methods in \Cref{sec:analysis} following the process suggested by \citepair{Holland2012}. We direct the reader to the two aforementioned papers for background on squangles and how they may be used to evaluate splits.

The general philosophy followed here is that, under standard phylogenetic models, evolutionary divergence events are modelled as discrete instantaneous branching of lineages, whereas the subsequent evolution occurs as a continuous time random process of nucleotide substitutions.
Thus, if one views the primary difficulty of phylogenetic inference to be the correct identification of branching events (i.e., the inference of the phylogenetic tree), the random nucleotide substitutions that occur on the pendant edges of the tree contribute nothing and make the inference more difficult.   
Indeed, the above stated properties of the flattening matrix are contingent on a particular edge being identifiable with evolutionary divergence events occurring some finite time prior.
Thus it is valuable to understand the theoretical behaviour of the flattening matrix as time passes under the given Markov model of sequence evolution.
The relevant matrix transformation rule is well understood (and was presented in Allman Rhodes \cite{Allman2008}), and is crucial to the the theoretical derivation of the rules for the rank of the flattening matrices.
Exploring these ideas further, Sumner \cite{Sumner2017} showed analogous transformation rules are obtainable from `subflattenings', which are particular submatrices of the flattening matrices.
These submatrices have significantly smaller dimensions than the flattenings, reducing the dimension from $k^{|A|}\times k^{|B|}$ to $({|A|}(k-1)+1)\times ({|B|}(k-1)+1)$ for a split $A|B$, where $k$ is the size of the state space (for example $k=4$ for the state space of DNA nucleotides $\kappa := \set{A,C,G,T}$).
Thus the subflattening has the advantage of having dimensions which are linear in the size of each split, and \citeauthor{Sumner2017} showed that the rank conditions on the subflattenings matrices can also be used to (theoretically) identify splits corresponding to edges in the true phylogenetic tree \cite{Sumner2017}.

The purpose of the present work is to show the relative performance of these methods---in terms of their ability to identify edges in the true tree---under a simple simulation framework. In particular, we focus on simulating under the Jukes Cantor model with varying sequence length and branch lengths. Throughout, `branch length' refers to the expected number of substitutions per site. We also do not consider complications arising under the coalescent, in order to keep the presentation and comparisons as clean and clear as possible. First, we provide the relevant background definitions and theorems, including the flattening and subflattening rank theorems. In \Cref{sec:code_and_computation}, we identify some important considerations in the implementation of flattenings and subflattenings, and describe the software we have developed to construct such matrices and identify true splits. In \Cref{sec:analysis}, we detail the simulations we performed in order to evaluate the performance of subflattening matrices in comparison to flattenings and squangles. The results of the simulations are discussed in \Cref{sec:results}, and we conclude with a discussion in \Cref{sec:discussion} including suggestions for future work.

\section{Background} \label{sec:background}

For a given set of taxa $X$, a \textit{split} is defined as \textit{partition} of $X$ into two sets $A$ and $B$ and is written $A|B$ (or equivalently $B|A$). We may also omit set notation. For example, if $X=\{1,2,3,4\}, A=\{2\}$ and $B=\{1,3,4\}$, then the split $A|B$ can be written $134|2$. Since splits are bipartitions, we can also omit one side of the split when $X$ is clear. In this case, we might write $134|2$ as $134$ or simply $2$. We define the \textit{size} of a split to be the smallest subset (for example, the size of $134|2$ is 1) and note that the size of a split can be thought of as its \textit{balance}; this is because for a fixed number of taxa, the larger the size of a split, the more balanced the taxa are across the two subsets. Splits are a natural idea in phylogenetics, because each edge in a phylogenetic tree can be identified with a split. To see this, notice that removing an edge from a tree creates two smaller trees and hence a partition of the taxa into two sets; the induced partition can be represented as a split. Of course, not all splits will represent edges in a given tree. For example, if a tree displays two taxa as a cherry, then any split with size more than 1 which separates these two taxa will not correspond to any edge in the tree. Splits with size equal to $1$ will always correspond to leaf edges, and for this reason they are referred to as \textit{trivial} splits. Given a tree, we say that splits which correspond to edges are \textit{displayed} by the tree, and refer to them as $true$ splits. For example, trivial splits are always true splits. Splits which are not displayed by the tree are referred to as false splits. Splits are \textit{compatible} if there exists some tree which can display both splits. For example, the splits $01|23$ and $02|13$ are not compatible on any tree. Due to the correspondence between true splits and edges, the complete set of true splits uniquely defines the tree \cite[Proposition 2.4]{Steel2016}. 

Fundamentally, inference of phylogenetic trees amounts to the identification of true splits. Importantly, there are far fewer splits on $n$ taxa than there are trees on $n$ taxa. The number of splits is given by $2^{(n-1)}\!-\!1$ whereas the number of unrooted trees is given by $(2n-5)!! = \frac{(2 (k-2)) !}{2^{(k-2)} (k-2) !}$ (which grows like $2^k k!$). It is also worth noting that a tree without branch lengths (or with branch lengths which are all equal) may have a number of splits which are symmetrically equivalent (that is, they are invariant under an action of the automorphism group of the tree under vertex permutations). For example, for a balanced four-taxon unweighted tree with leaf labels $\{0,1,2,3\}$ and cherries $\{0,1\}$ and $\{2,3\}$, we can say that the (false) splits $02|13$ and $03|12$ are symmetrically equivalent. A true split on a tree induces two sub-trees which are disjoint (that is, they share no edges in the original tree). A false split, however, will always give rise to two sub-trees which share one or more internal edges in the tree. The number of shared edges between the two sub-trees may be taken as a measure of `how false' a given false split is on a tree, and is a statistic which explains some of the results displayed in the next section.

In practice, we can obtain a measure of how likely it is (in a colloquial sense) for a split to be true, using the rank properties of flattening and subflattening matrices. The flattening rank properties are presented in the following theorem.

\begin{theorem}
	\label{thm:flat_rank}
	Given a phylogenetic tree $T$ and state space $\kappa = \set{1,...,k}$, the \textit{flattening} matrix $\mathrm{Flat}_{A|B}(P)$ computed from a split $A|B$ has $\mathrm{rank}(\mathrm{Flat}_{A|B}(P)) = k$ if the split $A|B$ is displayed by $T$, and has $\mathrm{rank}(\mathrm{Flat}_{A|B}(P)) = k^{p_{A|B}}$ otherwise, where $p_{A|B}$ is the parsimony score of the split $A|B$ on the tree $T$.
\end{theorem}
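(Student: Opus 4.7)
The plan is to prove both cases uniformly by exhibiting a rank-revealing factorisation of $\mathrm{Flat}_{A|B}(P)$ coming from the tree structure, using the fact that under GMM the joint distribution $P$ is a contraction of per-vertex/per-edge stochastic tensors along the edges of $T$. Note that the statement actually unifies under the formula $\mathrm{rank} = k^{p_{A|B}}$, since a displayed split has parsimony score exactly $1$.

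First I would identify a minimum edge cut $C \subseteq E(T)$ separating the leaves in $A$ from those in $B$; by the standard characterisation, $|C| = p_{A|B}$ equals the parsimony score of the two-state character that labels all of $A$ with one state and all of $B$ with another. Removing $C$ from $T$ decomposes the tree into a forest in which every leaf is either purely in $A$ or purely in $B$. Because the GMM parametrisation makes $P$ a sum over hidden assignments to internal vertices of products of transition matrices along edges, I can condition on the states assigned to the endpoints of the cut edges in $C$, and the probability splits into an $A$-side factor and a $B$-side factor that only communicate through that cut assignment. This yields an explicit factorisation
\[
\mathrm{Flat}_{A|B}(P) \;=\; M_A\, D\, M_B^{\top},
\]
where $M_A$ has size $k^{|A|} \times k^{p_{A|B}}$, $M_B$ has size $k^{|B|} \times k^{p_{A|B}}$, and $D$ is diagonal in the cut-state index (collecting root distribution and the transition probabilities along the cut edges). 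This immediately gives the upper bound $\mathrm{rank}(\mathrm{Flat}_{A|B}(P)) \leq k^{p_{A|B}}$, and specialises to the bound $k$ when $A|B$ is displayed (a single internal edge is itself a cut of size $1$).

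Next I would argue the bound is generically tight. The columns of $M_A$ are, up to the diagonal factors, marginal distributions on the $A$-leaves conditioned on fixed states at the $A$-side endpoints of the cut edges. Under GMM the transition matrices are generically invertible, so these conditional distributions are linearly independent as functions of the cut-state index; the same holds for $M_B$. Hence $M_A$ and $M_B$ generically have full column rank $k^{p_{A|B}}$, and combined with non-vanishing $D$ we obtain $\mathrm{rank}(\mathrm{Flat}_{A|B}(P)) = k^{p_{A|B}}$. A clean way to make this rigorous is to note that the locus of parameters producing a rank drop is Zariski-closed and proper, so one only needs to exhibit a single parameter choice attaining the bound — for instance, transitions close to the identity, where the factorisation is manifestly non-degenerate.

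The main obstacle is the tight lower bound on the rank for non-displayed splits. The upper bound is essentially combinatorial once the cut $C$ is in hand, but to show that $C$ is actually optimal — i.e.\ that no cheaper factorisation exists — one must rule out accidental cancellations across the forest components. This is precisely where the matching with the parsimony score enters: if one tried to use fewer than $p_{A|B}$ summands, one of the forest components would contain leaves from both $A$ and $B$, so no factorisation of the above form exists, and any rival factorisation would impose non-generic algebraic relations among the GMM parameters. Spelling this out carefully, e.g.\ via the Allman--Rhodes tensor-network argument applied to each maximal subtree separated by $C$, is the delicate part of the proof.
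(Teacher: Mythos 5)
The paper does not actually prove this theorem: its ``proof'' is a pointer to \cite{Snyman2021}, which strengthens \cite{Eriksson2005} and \cite{FernanndezSanchez2015}. Your outline is essentially the route those references take (minimum edge cut $=$ parsimony score, factorisation of the flattening through the cut, genericity for tightness), so you are attempting to supply the argument the paper outsources rather than taking a different road. Your upper-bound half is sound: the $B$-side endpoints of the cut edges form a vertex separator, the Markov property on the tree gives conditional independence of the $A$- and $B$-leaf patterns given those states, and hence $\mathrm{Flat}_{A|B}(P)=M_AM_B^{\top}$ with inner dimension $k^{p_{A|B}}$ (your diagonal $D$ is a harmless repackaging; note also that one should check, as holds for trivalent trees and minimum cuts, that no two cut edges share an endpoint, else the separator would give a strictly smaller bound and contradict the claimed equality). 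You are also right, and more careful than the paper, to insert the word ``generic'': as stated the theorem is false at special parameter values.

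The genuine gap is exactly where you locate it, but your proposed fix does not work as written. You suggest certifying properness of the rank-drop locus by a witness with ``transitions close to the identity, where the factorisation is manifestly non-degenerate.'' At the identity every flattening has rank exactly $k$ (the distribution is supported on constant patterns), so the degeneration sits precisely at your proposed base point; near the identity the relevant $k^{p_{A|B}}\times k^{p_{A|B}}$ minor is a quantity vanishing to some order in the perturbation, and showing it is not identically zero is the whole problem. Moreover $M_A$ is not generally a Kronecker product of transition matrices times a diagonal factor: its columns are joint marginals over the $A$-leaves and the cut vertices, which communicate through the internal structure of the tree, so ``the transition matrices are generically invertible, hence the conditional distributions are linearly independent'' does not follow without an inductive argument over the subtrees separated by the cut. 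Closing this is precisely the content of \cite{FernanndezSanchez2015} and \cite{Snyman2021} (Eriksson's original tightness claim was incomplete for the same reason), so your sketch establishes $\mathrm{rank}\le k^{p_{A|B}}$ but not the stated equality.
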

\begin{proof}
A proof is provided by \citepair{Snyman2021}, strengthening a previous result by \citepair{Eriksson2005} and \citepair{FernanndezSanchez2015}.
\end{proof}

We will discuss shortly how the rank of these matrices can be evaluated in practice, but first we note that these matrices are exceedingly large. The subflattening is a similar construction which is much smaller in dimension \cite{Sumner2017}. To understand the theoretical origin of subflattenings, we give an illustrative explanation using the simplest possible scenario: binary sequence data ($k=2$) and a split $A|B$ with $|A|\!=\!|B|\!=\!2$.
The general case is presented in \cite{Sumner2017}.

For the binary states, one can express a general $2\times 2$ Markov matrix as
\[
M=
\left(
\begin{matrix}
1-a & a\\
b & 1-b
\end{matrix}
\right).
\]
Following the well-known Hadamard conjugation \cite{hendy1994discrete}, we consider what happens to this matrix under conjugation by a Hadamard matrix $H$:
\[
    HMH^{-1}=
    \left(
    \begin{matrix}
        1 & 1\\
        1 & -1
    \end{matrix}
    \right)
    \left(
    \begin{matrix}
        1-a & a\\
        b & 1-b
    \end{matrix}
    \right)
    \left(
    \begin{matrix}
        \tfrac{1}{2} & \tfrac{1}{2}\\
        \tfrac{1}{2} & -\tfrac{1}{2}
    \end{matrix}
    \right)
    =
    \left(
    \begin{matrix}
        1 & b-a\\
        0 & 1-a-b
    \end{matrix}
    \right)
    \equiv
    \left(
    \begin{matrix}
        1 & v\\
        0 & z
    \end{matrix}
    \right),
\]
say. Note that the constraint $0 < a,b < 1$ is converted to $-1 < v,z < 1$ and 
Now, multiplying two matrices of the above form gives
\[
    \left(
    \begin{matrix}
        1 & v\\
        0 & z
    \end{matrix}
    \right)
    \left(
    \begin{matrix}
        1 & v'\\
        0 & z'
    \end{matrix}
    \right)
    =
    \left(
    \begin{matrix}
        1 & vz'+v'\\
        0 & zz'
    \end{matrix}
    \right),
\]
or, directly in terms of parameters, we have:
\[
    (v,z)\ast (v',z') := (vz'+v',zz').
\]
This transformation rule tells us algebraically exactly what is happening to the substitution parameters on a single branch of a phylogenetic tree as time passes.

More generally, if we consider a phylogenetic tree and concentrate on the two lineages and substitutions between the the $2^2\!=\!4$ pairs of binary states as time passes, algebraically the independence of substitutions on the two lineages can be incorporated by taking the tensor product of two Markov matrices $M_1\otimes M_2$.
Working with the Hadamard conjugated matrices, this gives us (via the rules for Kronecker products):
\[
M_1\otimes M_2 \mapsto (HM_1H^{-1})\otimes (HM_2H^{-1})=
\left(
\begin{matrix}
1 & v_1\\
0 & z_1
\end{matrix}
\right)
\otimes 
\left(
\begin{matrix}
1 & v_2\\
0 & z_2
\end{matrix}
\right)
=
\left(
\begin{matrix}
1 & v_2 & v_1 & v_1v_2\\
0 & z_2 & 0 & v_1z_2\\
0 & 0 & z_1 & z_1v_2\\
0 & 0 & 0 & z_1z_2
\end{matrix}
\right).
\]
It is then a tedious exercise (left to the reader) to verify that multiplying two such $4\times 4$ matrices produces the following transformation rule stated in terms of the substitution parameters:
\begin{align*}
    (v_1,z_1)\ast (v_1',z_1') &\mapsto (v_1z_1'+v_1',z_1z_1'), \text{and}  \\
    (v_2,z_2)\ast (v_2',z_2') &\mapsto (v_2z_2'+v_2',z_2z_2').
\end{align*}
The key idea presented in \cite{Sumner2017} was that the same information is contained in upper-left $3\times 3$ submatrix  
\[
\left(
\begin{matrix}
1 & v_2 & v_1 \\
0 & z_2 & 0 \\
0 & 0 & z_1 
\end{matrix}
\right),
\]
as, again, the reader can verify by multiplying two matrices of this type.
The subflattening itself is then obtained by (i) considering a four taxon alignment of binary pattern counts, (ii) forming the corresponding $2\times 2 \times 2 \times 2$ array of binary state pattern counts, (iii) implementing the Hadamard conjugation on this array, (iv) choosing a split and flattening the array of pattern counts into the corresponding $4\times 4$ matrix, and (v) taking upper-left $3\times 3$ submatrix.  

\begin{theorem} \label{thm:subflatRankPars}
	Let $T$ be a tree with site pattern probability distribution $P$. Let $A|B$ be a split. A subflattening matrix $\mathrm{Subflat}_{A|B}(P)$ has rank $k$ if $A|B$ is displayed by $T$, and rank $r(k\!-\!1)\!+\!1$ otherwise, where $r$ is the parsimony score of the split $A|B$.
\end{theorem}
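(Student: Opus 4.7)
The plan is to work in the Hadamard-style conjugated basis introduced in the excerpt, where the subflattening emerges naturally as the upper-left $((|A|(k-1)+1)\times(|B|(k-1)+1))$ submatrix of a block-triangular object obtained from the flattening. The main observation is that this submatrix is closed under Kronecker products and matrix multiplication of conjugated Markov matrices, so the subflattening associated to the whole tree admits a tensor-network style decomposition indexed by $T$, directly analogous to the decomposition used in the proof of \Cref{thm:flat_rank}. I would set up this formalism first, pushing conjugation factors onto each edge so that each edge contributes a transformed transition matrix whose upper-left entry is $1$ and whose strictly lower-triangular block vanishes.

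Next I would handle the displayed case. If $A|B$ corresponds to an internal edge $e^{\star}$ of $T$, then the tensor decomposition can be contracted so that the subflattening factors as $U\, \widetilde{M}_{e^{\star}}\, V^{T}$, where $\widetilde{M}_{e^{\star}}$ is the conjugated $k\times k$ transition matrix on $e^{\star}$, and $U$, $V$ collect the contributions of the two sides of the split. For generic model parameters $\widetilde{M}_{e^{\star}}$ has rank $k$, and $U$, $V$ each possess $k$ linearly independent columns, so the subflattening has rank exactly $k$.

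For the non-displayed case I would induct on the structure of the tree, or equivalently on the parsimony score $r$. If $A|B$ is not displayed, then every edge of $T$ has $A$-taxa and $B$-taxa on both sides, and the Fitch/Hartigan characterisation of parsimony identifies a set of $r$ edges along which an optimal labelling must switch between the two sides of the split. Each such ``crossing'' edge, after conjugation, contributes an independent $(k-1)$-dimensional block to the factorisation (the off-diagonal block of the conjugated Markov matrix), while the preserved $1$ in the top-left corner of every conjugated factor accounts for the $+1$ in $r(k-1)+1$. Combining contributions multiplicatively along the tree, the upper bound $\mathrm{rank}(\mathrm{Subflat}_{A|B}(P))\le r(k-1)+1$ follows directly.

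The main obstacle will be showing equality rather than just the upper bound in the non-displayed case: one needs that the $r$ crossing edges truly contribute linearly independent blocks in the submatrix after all other edges are contracted away. This amounts to exhibiting a generic choice of edge parameters for which the induced $r(k-1)+1$ rows and columns of the factorisation remain linearly independent, which is typically done either by exhibiting explicit parameter values (e.g.\ taking all non-crossing edges near the identity) or by a Zariski-open genericity argument on the parameter space. The template here is the flattening rank argument of \cite{Snyman2021}, with additional bookkeeping needed to confirm that passing to the upper-left submatrix does not destroy the independence of these contributions; the block-triangular zeros displayed in the excerpt are what ensure that no information is lost in this projection, and verifying this carefully is where the real work lies.
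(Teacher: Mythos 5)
The first thing to note is that the paper does not actually prove this statement: its ``proof'' is the single line ``See \cite{Sumner2017}'', so the only meaningful comparison is with the argument in that reference (sketched for $k=2$, $|A|=|B|=2$ in \Cref{sec:background}). Your outline does follow the same strategy as the source: conjugate each Markov matrix into the form $\bigl(\begin{smallmatrix}1 & \ast\\ 0 & \ast\end{smallmatrix}\bigr)$, observe that the span of the ``weight at most one'' coordinates is closed under matrix products and Kronecker products, and conclude that the subflattening inherits a tensor-network factorisation through the split edge exactly as the flattening does in \Cref{thm:flat_rank}. Your treatment of the displayed case (factor as $U\widetilde{M}_{e^\star}V^{T}$ with $\widetilde{M}_{e^\star}$ of size $k\times k$ and $U,V$ of full column rank for generic parameters) is essentially the published argument.

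That said, two genuine problems remain. First, the step you explicitly defer --- equality, rather than merely the upper bound $\mathrm{rank}\le r(k-1)+1$, in the non-displayed case --- is the actual content of the theorem, and your sketch does not supply it; note also that equality can only hold \emph{generically} (degenerate edge parameters drop the rank), a caveat that both you and the paper's statement omit. The usual repair is a Zariski-openness argument together with a single explicit witness of full rank, and without at least naming such a witness the lower bound is unproved. Second, your claim that if $A|B$ is not displayed ``then every edge of $T$ has $A$-taxa and $B$-taxa on both sides'' is false: any pendant edge, and typically several internal edges, have one side contained entirely in $A$ or entirely in $B$. What the argument actually requires is only that a minimum-change (Fitch) extension of the leaf colouring $A|B$ has exactly $r\ge 2$ change edges, and that contracting all non-change edges reduces the subflattening to the weight-$\le 1$ block of a tensor product of $r$ generic conjugated $k\times k$ matrices, whose dimension is $r(k-1)+1$. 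With that bookkeeping corrected and a genericity argument for the lower bound added, your plan would reproduce the proof of \cite{Sumner2017}.
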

\begin{proof}
    See \citepair{Sumner2017}.
\end{proof}

Generalising to an arbitrary number of lineages leads directly to the subflattening matrices, but the above gives the basic underlying idea. Note that for the remainder of this paper, we will assume that $k=4$, referring to the size of our state-space of DNA nucleotides $\kappa:=\set{A,C,G,T}$.

As stated above, the dimensions of subflattenings is greatly reduced in comparison to the flattenings (exponential to linear).
However, this comes at the price of the Hadamard transformation which means that the entries of the subflattenings are not directly obtainable from counting patterns in the sequence alignment. That is, entries do not correspond to single site-patterns, as they do for flattenings. Despite this, the subflattenings can still be computed efficiently, as described in \Cref{sec:code_and_computation}.

After flattenings and subflattenings have been constructed using observed pattern frequencies from a sequence alignment, extracting information from them involves evaluating how `close' the given matrix is to having rank $k$. Singular value decomposition has often been suggested in the literature as a way to evaluate this \cite{Allman2017, Eriksson2005}, due to the connection between singular values and low-rank matrix approximation \cite{Eckart1936}. Put briefly, the sum of the singular values---excluding the $m$ largest---of a matrix $F$, is the distance between $F$ and the closest rank-$m$ matrix to $F$ under the Frobenius norm. We are interested in the case where $m$ is set to the number of states, $k$. \citeauthor{Allman2017} \cite{Allman2017} define the `split score':

\begin{definition}
For flattening (or subflattening) matrix $F_{A|B}$ arising from a split $A|B$ and a sequence alignment with $k$ possible states, define the \textit{split score},
\[ \displaystyle \mathrm{SplitScore}\!\left(F_{A|B}\right)=\left(1-\frac{\sum_{i=1}^{k} \sigma_{i}^{2}}{\|F_{A|B}\|^{2}}\right)^{\frac{1}{2}}, \]
where $\sigma_{1} \geq \sigma_{2} \geq \cdots \geq \sigma_{\min (m, n)} \geq 0$ are the singular values of $F_{A|B}$, and $\|F_{A|B}\|$ is the Frobenius norm of $F_{A|B}$.
\end{definition}

The split score is a normalised measure of the distance to the nearest rank-$k$ matrix and requires only the largest $k$ singular values to be computed. This saves time on computation. The code we provide to evaluate splits using flattenings and subflattenings utilises this definition of the split score.
\section{Code and computational considerations} \label{sec:code_and_computation}

\subsection{Code}

Two separate programs were developed for constructing subflattenings, computing split scores and performing the simulations and analysis detailed in \Cref{sec:analysis}. The first implementation, \flatbush\ was written in C++ by \citepair{Charleston2021}. \flatbush\ is a command-line program that accepts either FASTA or NEXUS format sequence alignment data, currently limited to the nucleotide alphabet of characters, $\kappa = \set{A,G,C,T}$.
If the NEXUS format is used then a ``FLATBUSH" block may be included, which may include instead of an alignment a set of splits with weights (which would otherwise be calculated from an alignment).
Splits may be input either as taxon names or in a compact binary encoding, and can be entered explicitly or by opting to do all possible non-trivial splits, subject to the number of taxa not being too large (in this case 16, however note that in principle \flatbush\ can accept any number of taxa and sequences as input). Input arguments may also include an input tree in simple Newick format, or a set of splits of interest. Alternatively, a tree or set of splits may be provided as command-line arguments. Please see the supplementary material for additional information.

A second implementation, \splitp\ is a python package developed by \citepair{Stevenson2022}. As \splitp\ was written independently of \flatbush, we were able to verify our calculations by comparing outputs of both programs with the same inputs. While \splitp\ is generally slower than the \flatbush\ implementation, it still includes the optimisations detailed in the remainder of this section, and can also compute and evaluate scores for flattening matrices and squangles. For this reason, we use \splitp\ for the analyses in \Cref{sec:analysis}. As input, \splitp\ accepts FASTA formatted sequence alignments or a table of site-pattern counts. Alternatively, \splitp\ also accepts trees with up to 36 taxa in the Newick format. \splitp\ can compute flattenings and subflattenings using exact site-pattern probabilities, or can generate sequence alignments of a given length, making it useful for simulations detailed in \Cref{sec:analysis}. Any sub-model of the general Markov model is supported, as the user can reassign any Markov matrix to each edge in the tree. \splitp\ is available on PyPI as well as on GitHub (at \url{https://github.com/js51/SplitP}) as open source software, as is \flatbush\ (at \url{https://github.com/mcharleston/Flatbush}) and we encourage code contributions.

\subsection{Sparsity}
Naïve approaches to problems involving very large matrices quickly lead to memory problems. For example, with just 8 taxa, the number of entries in a flattening matrix is $4^{16} \approx$ 4.2 billion. Assuming the entries are 4 bytes each, this leads to over 17 gigabytes of RAM required just to store the flattening in memory. Adding just one more taxon increases this number to more than 4.3 terabytes. Fortunately, flattening matrices are sparse. That is, the number of non-zero entries is typically in the order of the number of rows or columns. There exist a number of computational techniques for dealing with large sparse matrices, including techniques for computing the singular value decomposition, for example. These techniques are implemented in most major scientific computing packages, for example SciPy \cite{scipy} for Python or Eigen \cite{eigenweb} for C++.

\textit{Subflattenings} on the other hand are never sparse, but have dimensions which are quadratic in the number of taxa rather than exponential, and can therefore offer an even greater benefit than sparse matrix methods in certain situations. In a `worst case' scenario, in which each site pattern is equally likely to be observed, the expected number of non-zero entries in a flattening matrix is $L\!-\!{(L-1)^L}{L^{1-L}}$, where $L$ is the sequence length. In comparison, the expected number of non-zero entries in the subflattening is $(3t+1)^2$, where $t$ is the number of taxa. We can therefore determine under this worst-case scenario, the sequence length required to benefit (in terms of memory use) from subflattenings for a given number of taxa. For example, given a $6$-taxon tree, a sequence length of around $570$ is required, whereas for a $15$-taxon tree, a sequence length of over $3000$ is required in order to derive benefit from subflattenings. Of course, site-pattern probability distributions are never uniform, and so in practice these thresholds are much higher. 

\subsection{The Singular Value Decomposition}
Due to the ubiquity of the singular value decomposition in numerical methods, packages for computing singular values are heavily optimised. This is true even for the sparse matrix SVD packages, which \citepair{Allman2017} employ in order to compute split scores for flattenings. \splitp\ utilises SciPy's sparse matrix methods to compute singular values for flattenings, and standard methods for subflattenings. \flatbush\ utilises the C++ package Eigen \cite{eigenweb}. Importantly, the definition of the split score means that only the $k$ largest singular values are needed, leading to significant performance improvements for both methods.

\subsection{Subflattening transformation}
The bulk of the computation time for constructing the subflattenings from site pattern counts comes from the application of the following transformation of the (theoretical or observed) site-pattern distribution $P$ on the set of taxa $X$, with $|X|=t$.
\[ P \rightarrow \underbrace{H \otimes H \otimes \ldots \otimes H}_{t \text{ times}} \cdot P. \]
This transformation is equivalent to the one described in \Cref{sec:background}.
Specifically, the entries of the subflattening all come as sums of site-pattern probabilities,
\[\sum_{j_{1}, j_{2} \ldots, j_{t} \in \kappa} H_{i_{1} j_{1}} H_{i_{2} j_{2}} \ldots H_{i_{t} j_{t}} p_{j_{1} j_{2} \ldots j_{t}},\]
as presented by \citepair{Sumner2017}. Fortunately, the products of $H$ matrix entries in the above expression can be stored in a dictionary and re-used when computing subflattenings for other splits, and even for other sequence alignments. This technique, known as `memoisation', leads to a significant speed improvement, and allows us to perform the simulations we describe in \Cref{sec:analysis}.

\subsection{Number of splits}
As we have discussed, sparse matrix methods and the reduced size of subflattenings help to overcome memory constraints for computing split scores. In terms of time constraints, the sequence length is of little concern, as the time taken to construct a subflattening matrix, for example, increases only linearly as the sequence length increases. The real computational challenge which limits the usefulness of rank-based methods as a tool for inference is the number of splits, given by $2^{(n-1)}\!-\!1$, where $n$ is the number of taxa. A na\"ive way to infer a tree using the flattening or subflattening split scores is to compute a split score for every possible split, order them from smallest to largest, and select compatible splits from the top of the list until the tree is complete. There are of course other methods to obtain a tree by constructing fewer matrices and computing fewer singular values. An example of one such algorithm is presented by \citepair{Eriksson2005}. \citeauthor{Eriksson2005}'s SVD algorithm requires the computation of only $(n-1)^2 - 3$ split scores to produce an unrooted tree, and builds the tree from the bottom-up, beginning by selecting cherries. In practice, \citeauthor{Eriksson2005}'s SVD algorithm performs poorly \cite{Allman2017}. A possible explanation is that the algorithm compares split scores for splits of different sizes, which are known to be incomparable due to the bias of the flattening split score towards less balanced splits \cite{Allman2017}.  While to our knowledge, no other methods for inferring trees based on on flattening rank for species tree splits have been proposed, is it easy to see how other algorithms can be adapted to use the split scores, for example quartet methods \cite{Grunewald2006,Ranwez2001, Reaz2014, Snir2012, Willson1999}. Such methods may also avoid the split size bias. One example of this is the software is \texttt{SVDQuartets}, which was developed by \citeauthor{Chifman2014} \cite{Chifman2014, Chifman2015} and used to reconstruct trees based on the split scores for samples of possible quartets under the coalescent model.

\section{Analysis} \label{sec:analysis}
\subsection{Distribution of split scores on a 6-taxon tree}
\begin{figure}
    \centering
    \includegraphics[width=250pt]{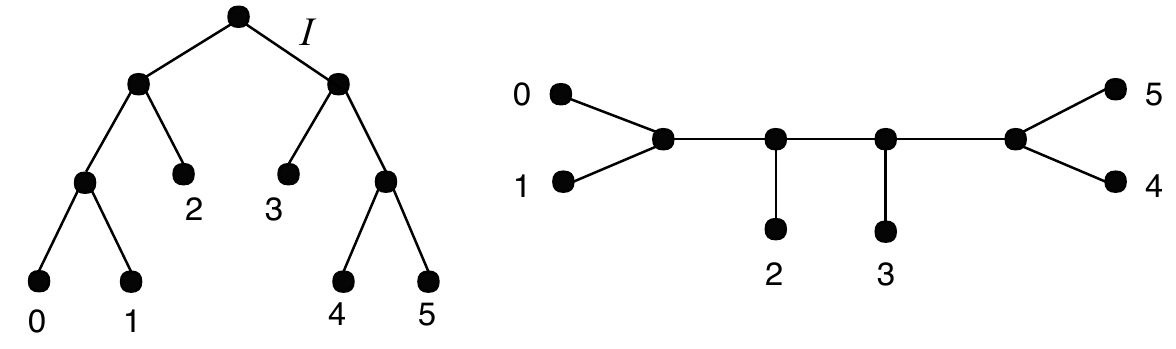}
    \caption{The 6 taxon tree used to simulate site-pattern probabilities and compute split scores for \Cref{fig:histograms}. All non-zero branch lengths are 0.1}
    \label{fig:tree6}
\end{figure}
We simulated site-pattern frequencies on the six-taxon tree shown in \Cref{fig:tree6}---the balanced, unrooted 6-taxon tree, under the Jukes Cantor model. For the first simulation, we set all branch lengths equal to 0.1. We first computed exact site-pattern probabilities and then simulated 1000 sequence alignments by drawing from the appropriate multinomial distribution. Split scores were then computed from these empirical site pattern probability distributions. Since all the branch lengths are the same, we can avoid computing split-scores unnecessarily by only computing scores for 9 inequivalent splits. That is, the chosen splits were representatives from the 9 orbits of the action of the tree automorphism group on the set of all splits. The automorphism group consists of all permutations of the leaf nodes which leave the tree unchanged. For example, on a quartet tree with cherries (1,2) and (3,4), the automorphism group is the dihedral group $\mathcal{D}_4$, consisting of permutations which swap nodes within each cherry, (12) and (34), the permutations which swaps the two cherries (13)(24) and (14)(23), as well as products of these. The interested reader can verify that the balanced 6-taxon tree shown in \Cref{fig:tree6} also has automorphism group $\mathcal{D}_4$, and the orbits of the group action are as follows:
\begin{align*}
    & [012|345] = \{012|345\},\\
	& [01|2345] = \{01|2345,\ 0123|45\},\\
	& [014|235] = \{014|235,\ 015|234,\  045|123,\ 023|145\},\\
	& [024|135] = \{024|135,\ 025|134,\  034|125,\ 035|124\},\\
	& [02|1345] = \{02|1345,\  0345|12,\ 0125|34,\ 0124|35\},\\
	& [03|1245] = \{03|1245,\  0245|13,\ 0134|25,\ 0135|24\},\\
	& [04|1235] = \{04|1235,\  05|1234,\ 0235|14,\ 0234|15\},\\
	& [0145|23] = \{0145|23\},\ \\
	& [013|245] = \{013|245\}.
\end{align*}
As an example, the two splits $014|235$ and $015|234$ are in the same class, since we can obtain one from the other by swapping the labels of the nodes $4$ and $5$, and this swap leaves the tree unchanged because these nodes form a cherry and have equal branch lengths.

Histograms were generated from split scores computed from simulated sequence lengths of 1,000bp and 10,000bp for both flattenings and subflattenings. In each case, 1,000 trials were conducted. That is, we performed 1,000 draws from the multinomial distribution in each case, and for each draw computed split scores for each of the splits shown. The entire simulation was then repeated with the same tree, but with every edge length changed from 0.1 to 0.8, emulating a tree which is much more difficult to infer. Results are displayed and discussed in \Cref{sec:histograms}.
    
\subsection{Sliding window analysis}
\label{sec:sliding_window_analysis}

The next analysis we conducted was a repeat of the sliding window analysis described in \cite{Allman2017} but for subflattenings. We also repeated the analysis using a third method for quartet reconstruction, the squangles (see work by \citeauthor{sumner2005entanglement} and \citepair{Holland2012} for background on this particular method). The sliding window analysis computes the best split (out of three possible) according to flattenings, subflattenings and squangles for each window in a four-taxon mosquito data set (\citepair{Wen2016}). Windows are 10,000bp long and at each step slide forwards 1,000bp in the alignment. The complete alignment is  over 37 million bp, giving approximately 3,700 windows in total, as described by \citeauthor{Allman2017}. This allows us to see how similar results were between the three methods across the length of the genome.
 
\subsection{Split balance bias analysis} \label{sec:split_balance}
In their \citeyear{Allman2017} paper, \citepair{Allman2017} discuss the bias of flattenings towards less `balanced' splits. Here more `balanced' splits are splits $A|B$ where $|A|$ is closer to $|B|$. For brevity, we refer to splits of size $m$ as $m$-splits. According to \citepair{Allman2017}, this bias is due to the dimension of the space of all matrices of a given shape compared to the subset of these matrices with rank $k$. Due to this bias, \citepair{Allman2017} recommends to the reader that splits of different balance not be compared by means of the split score. We hypothesised that due to their reduced dimensions, subflattenings may exhibit this bias to a lesser extent. To evaluate this, we evaluated a random selection of 190 $m$-split scores for each size $m$ on the 20 taxon tree shown in \Cref{fig:tree20}. This is the same tree used by \citepair{Allman2017} to show the existence of split-balance bias in flattening matrices. If the subflattenings are indeed less impacted by split-balance bias, we would expect the average split score to increase less dramatically for increasing split balance, when compared to split scores for flattenings.

 \begin{figure}
        \centering
        \includegraphics[width=0.85\textwidth]{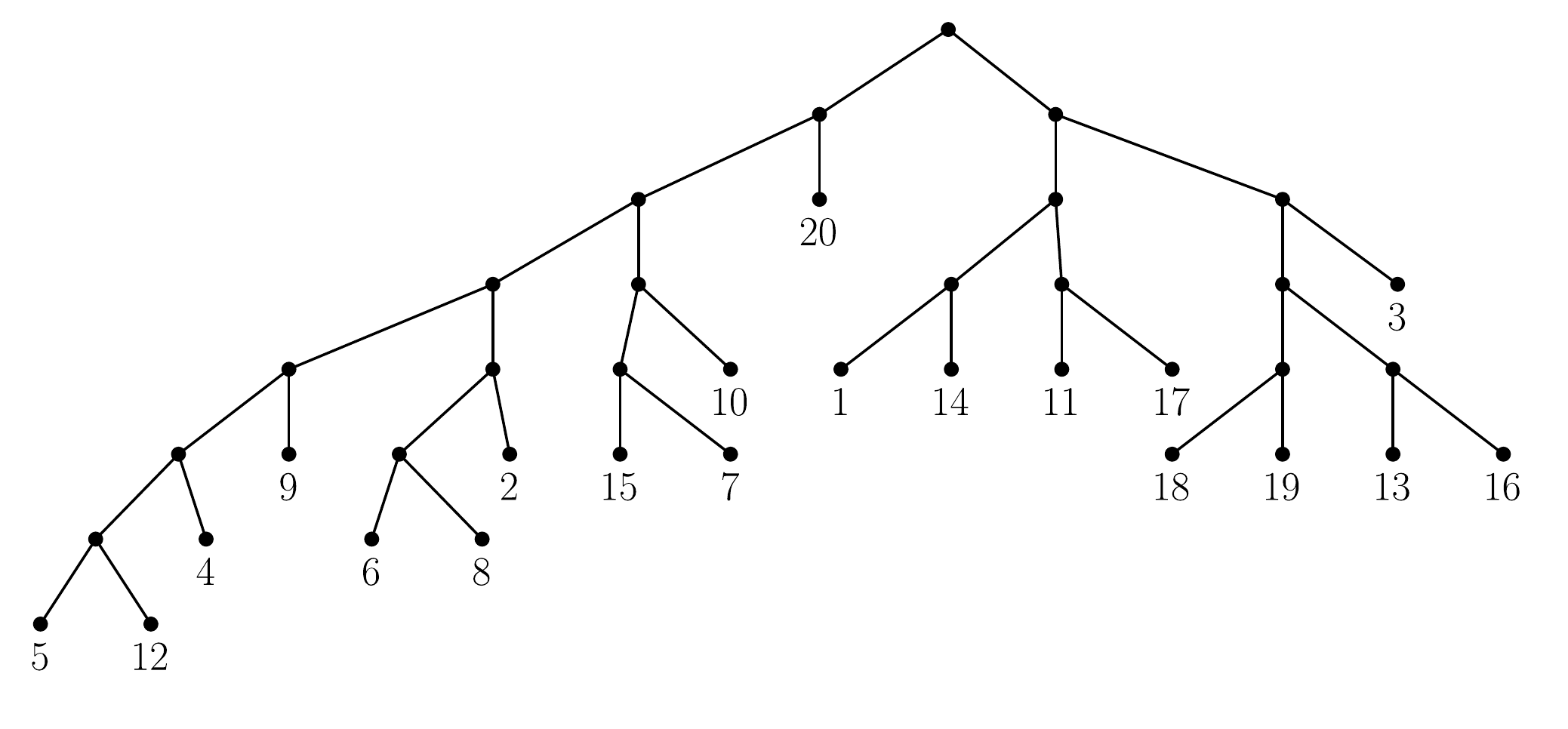}
        \caption{The 20-taxon tree used for split balance analysis described in \Cref{sec:split_balance} and by \citeauthor{Allman2017} in their \citeyear{Allman2017} paper \cite{Allman2017}.}
        \label{fig:tree20}
    \end{figure}
    
\subsection{LBA bias analysis} \label{sec:lba}

\begin{figure}
    \centering
    \includegraphics[width=170pt]{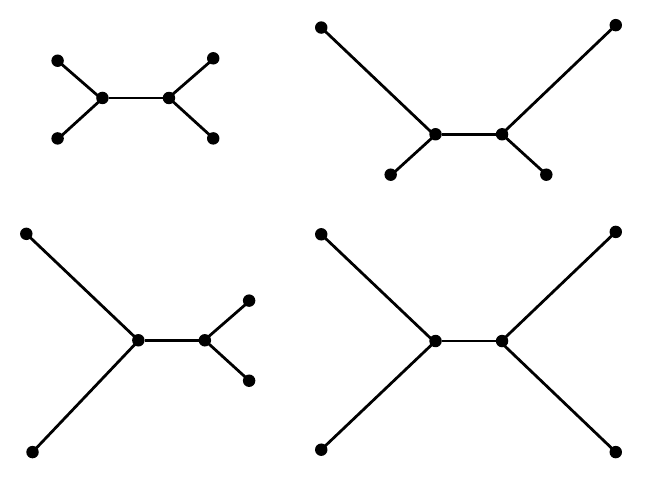}
    \caption{Quartet trees used for the simulations performed in \Cref{sec:lba}. Branch lengths are not to scale.}
    \label{fig:quartets}
\end{figure}

The next set of simulations we performed were on quartet trees with short and long branches (see \Cref{fig:quartets}). The short and long branch lengths chosen were 0.05 and 0.5 respectively. The simulation was then repeated with short and long branch lengths of 0.1 and 1.0 (maintaining the ratio of short to long branch length). In both cases, the internal branch was assigned the shorter branch length. We computed site pattern probabilities on each of these trees and drew from the corresponding multinomial distributions, simulating sequence lengths in intervals of 100bp from 100bp to 1,000bp. In each case, we perform 100 iterations and determine the number of times flattenings, subflattenings and squangles choose the true split. These simulations were designed to give an indication of how susceptible each method is to \textit{long branch attraction}: an important bias affecting some tree construction methods, first described by \citeauthor{Felsenstein1978} in \citeyear{Felsenstein1978} \cite{Felsenstein1978}. We also performed an additional simulation on a \textit{star} tree---that is, a quartet tree with an internal branch length of zero---with two long branches and two short branches. A method which is completely unbiased to long branch attraction would theoretically choose each of the three possible splits a third of the time. A biased method would choose the split which pairs the two long edges more or less often than the other two splits. Finally, we performed a similar simulation, but instead fixed the sequence length at 1000bp and allowed the long branch lengths to vary from 0.1 to 1.0 substitutions per site with the two other branch lengths fixed at 0.1. Results of these simulations are provided in \Cref{sec:lba_}.
\section{Results} \label{sec:results}
\subsection{Distribution of split scores on a 6-taxon tree } \label{sec:histograms}
    
Results of our initial simulations are shown in \Cref{fig:histograms}. We saw that split score distributions appeared to be similar between flattenings and subflattenings, and were affected similarly by increasing the sequence length. Split scores appeared to fall into four distinct groups, which we know are not a result of tree symmetries due to the chosen splits all being symmetrically distinct. Split scores appeared to be grouped by the number of \textit{shared edges} for the split on the tree. That is, the number of edges in the tree that are shared between the two subtrees induced by the two subsets of taxa. For example, the split $014|235$ on the unrooted 6-taxon caterpillar tree induces the following two subtrees (represented by the solid and dashed lines), which share two edges:
\begin{figure}[H]
    \centering
    \[\includegraphics[width=175pt]{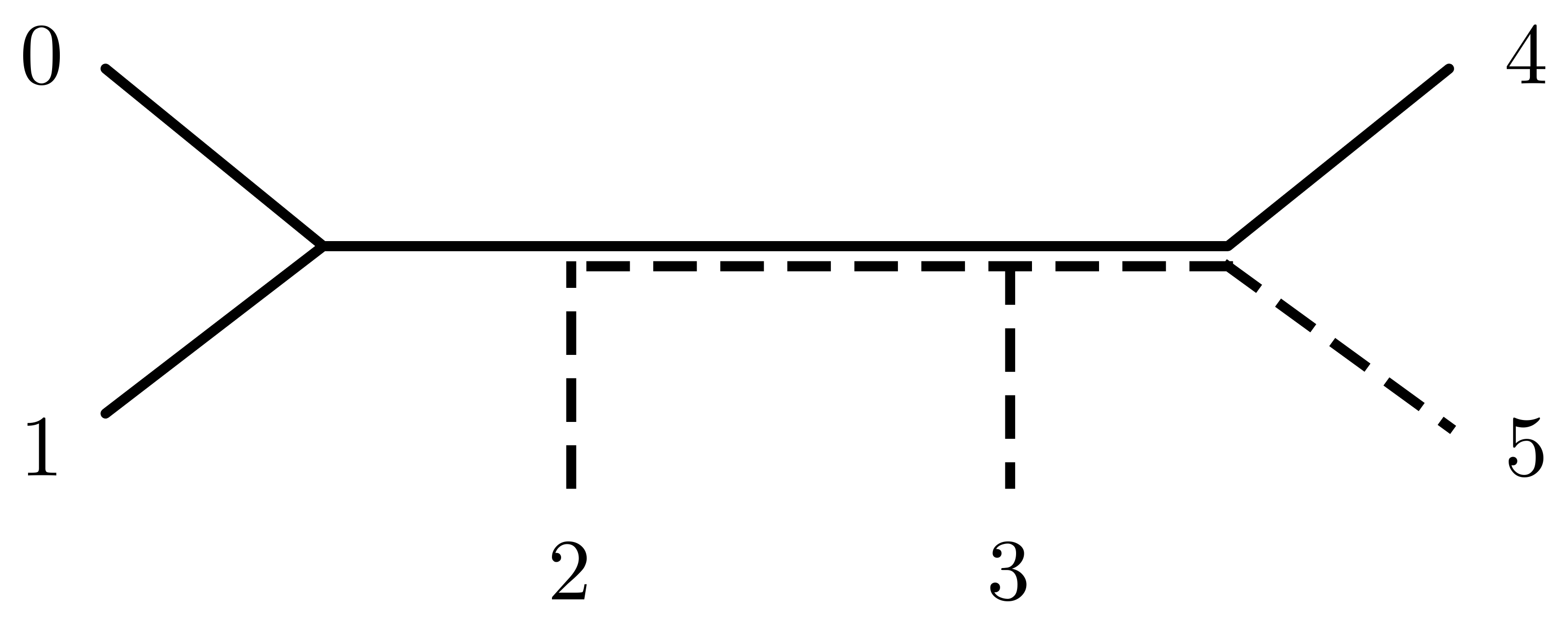}\]
    \caption{Shared edges example for the split $014|235$.}
    \label{fig:my_label}
\end{figure}

In our simulations, we commonly saw that the number of shared edges between subtrees induced by a split appeared to explain the ordering and grouping of split scores. In fact, despite the connection between the flattening/subflattening rank and the parsimony score for a split, the number of shared edges is a much better predictor of the split score than the parsimony score. As a simple test, we took the tree in \Cref{fig:tree6} and drew each edge length uniformly between 0.1 and 1 (Jukes-Cantor), and computed exact site-pattern probabilities and split scores for subflattenings corresponding to each split. Repeating this 20 times and compiling the scores, we fitted a linear model to predict split score based on shared edges, parsimony score and split balance. The best model (according to adjusted $R^2$) used only the number of shared edges, and explained approximately $61\%$ of the variability in the split score. This outcome was similar for flattenings, and increasing the lower bound on the edge lengths to 0.5 increased the $R^2$ value to approximately $89\%$.

\begin{figure}[!htb]
    \centering
    \includegraphics[width=\textwidth]{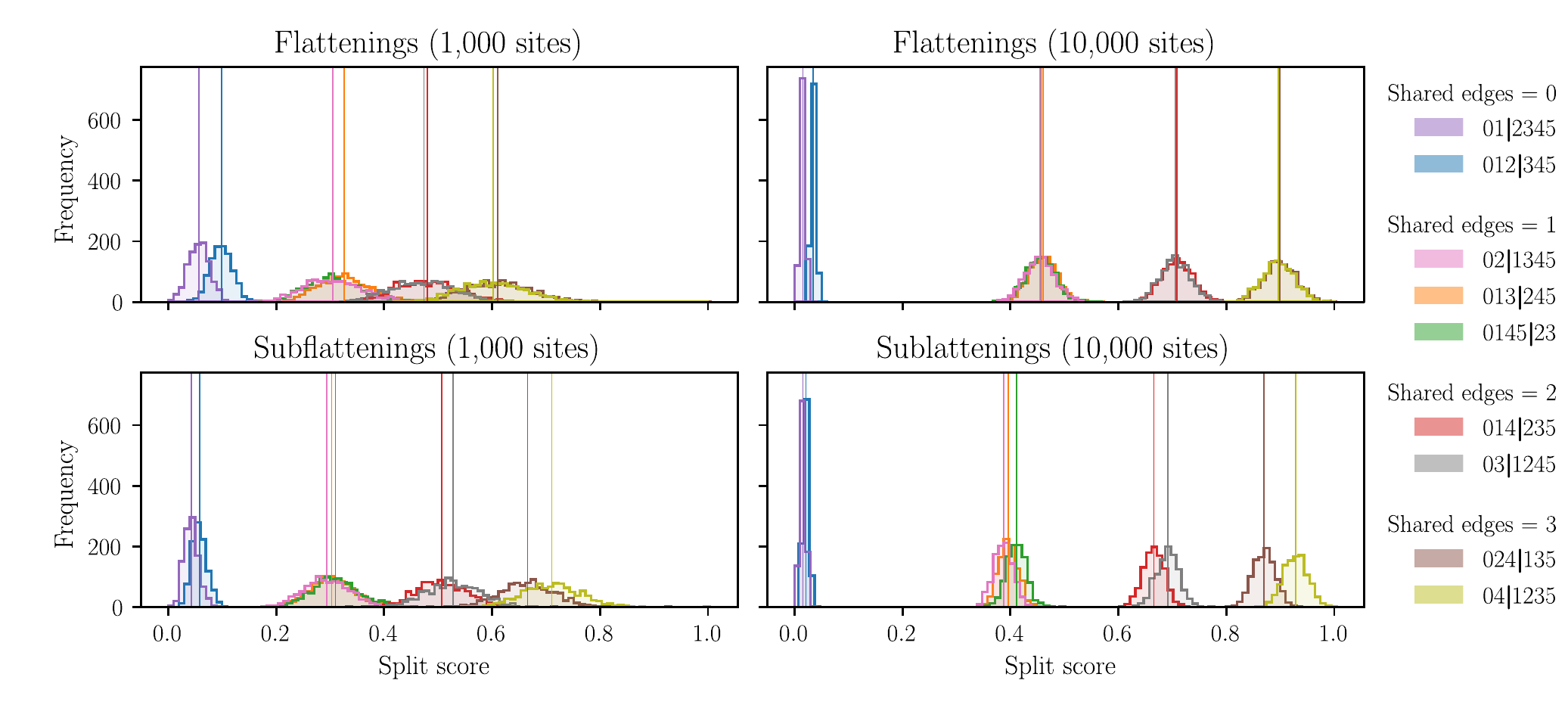}
    \caption{Distribution of split scores for flattenings and subflattenings on simulated pattern frequencies on the 6-taxon tree shown in \Cref{fig:tree6}. Branch lengths are all set to 0.1. The distributions are clearly separated by the number of shared edges.}
    \label{fig:histograms}
\end{figure}

While, like the split parsimony score, the number of shared edges is unhelpful in discovering true splits (since the true tree must be known in order to compute it), the measure may be helpful in determining exactly which properties of the tree most effect the split score, and should be investigated further.

\begin{figure}[!htb]
    \centering
    \includegraphics[width=\textwidth]{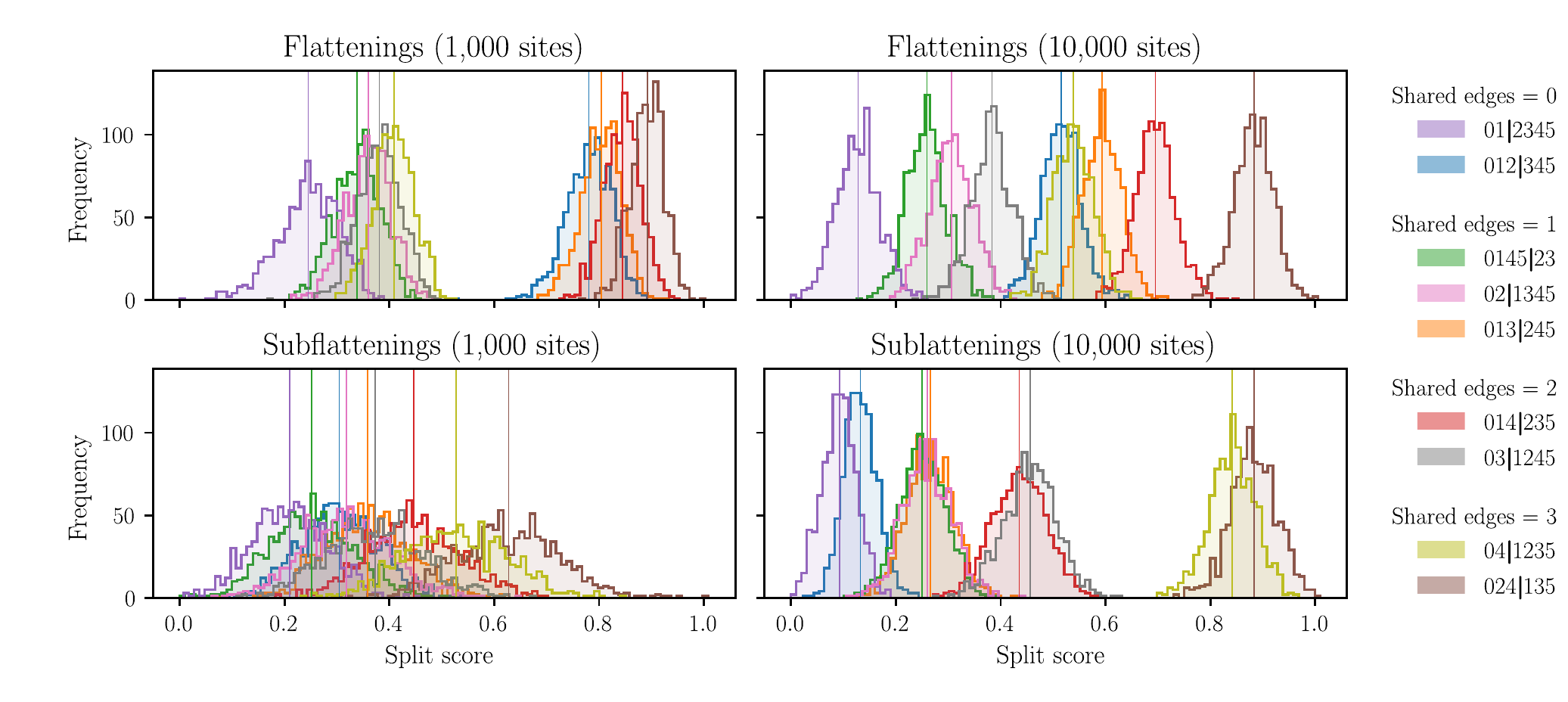}
    \caption{Distribution of split scores for flattenings and subflattenings on simulated pattern frequencies on the 6-taxon tree shown in \Cref{fig:tree6}. Branch lengths are all set to 0.8.}
    \label{fig:histograms_2}
\end{figure}

For the second simulation, the Jukes Cantor branch length for every branch in the tree were changed from 0.1 to 0.8. This change should make the tree more difficult to infer, since nucleotide substitutions are much more likely to occur along each edge. The results are shown in \Cref{fig:histograms_2}. We will discuss the flattening scores first. We see that for the more difficult tree and a sequence length of 1,000, the flattening split scores are grouped by the split size, and then ordered by the number of shared edges. This seems to suggest that split score bias worsens with a weaker phylogenetic signal. Increasing the sequence length to 10,000, we see that the effect of split size has a lesser impact on the distribution of scores, however the true split $012|345$ of size 3 still produced scores which were higher than three false splits of size 2. The subflattenings with a sequence length of 1,000bp seemed to perform similarly to the flattenings with sequence length 10,000bp, but with scores packed more closely together. Increasing the sequence length to 10,000, the subflattenings produced scores which looked much more similar to the results from \Cref{fig:histograms}, with splits grouped by the number of shared edges, and minimal impact of split size bias.

\subsection{Sliding window analysis }

    As expected, subflattenings and squangles were able to pick up the chromosomal inversion in the genome, just as flattenings did in the sliding window analysis completed by \citepair{Allman2017}. The three methods agreed upon the most likely phylogeny for the majority of windows. Interestingly, the most common case involving some disagreement between the three methods was that in which squangles and flattenings chose a different split to the subflattenings. Further, the least common case was when squangles and subflattenings were in agreement with each other, but not with the flattenings. This is visible in \Cref{fig:sliding_window_plot} and more clearly in \Cref{fig:sliding_window_venn}. This was not expected because of the sense that squangles and subflattenings have more in common---they both involve the same transformation of site-pattern probabilities detailed in \Cref{sec:background}.
    
    \begin{figure}[!htb]
        \centering
        \includegraphics[width=\textwidth]{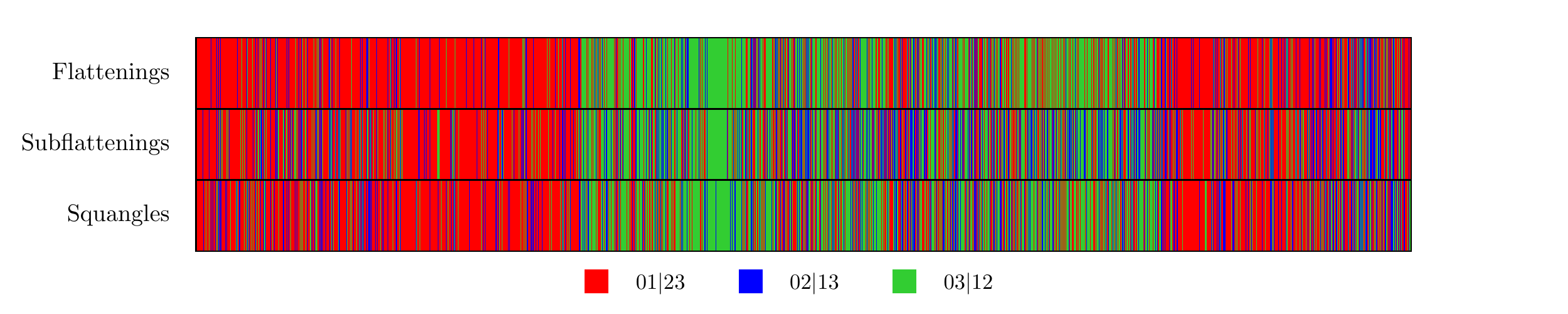}
        \caption{Results of the sliding window analysis detailed in \Cref{sec:sliding_window_analysis}. Each vertical line represents a window, and the colour represents the split which was chosen by each method for that particular window.}
        \label{fig:sliding_window_plot}
    \end{figure}
    
    \begin{figure}[!htb]
        \centering
        \includegraphics[width=300pt]{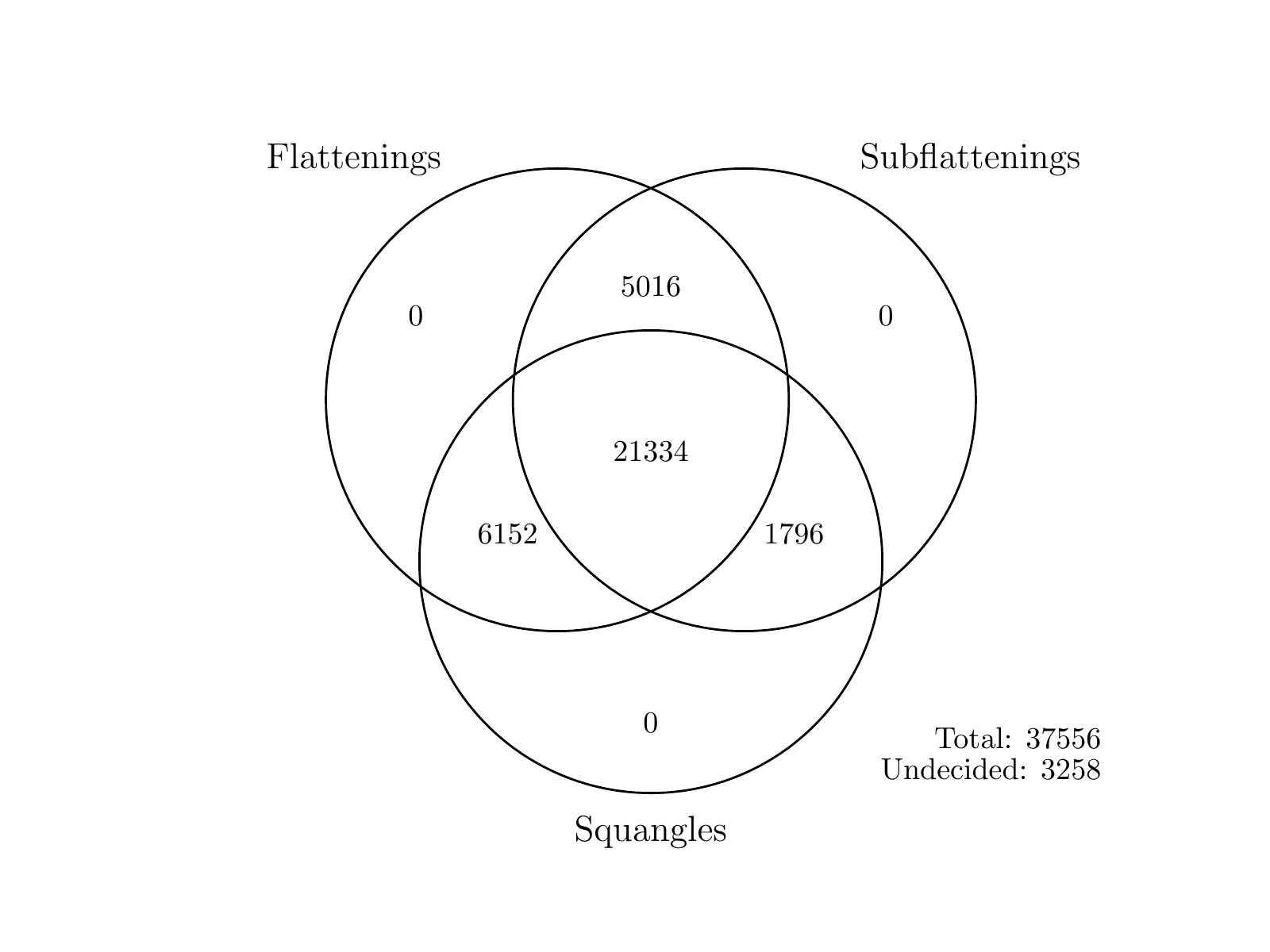}
        \caption{A summary of the results shown in \Cref{fig:sliding_window_plot} in the form of a Venn diagram. We can see that the three methods were in agreement for the majority of windows, and of the cases where there was a disagreement between two methods and a third, flattenings and squangles agreed most often.}
        \label{fig:sliding_window_venn}
    \end{figure}

\subsection{Split balance bias analysis }

From our simulations on the 20-taxon tree, both flattenings and subflattenings appeared to be biased towards less balanced splits, but as predicted, subflattenings appeared to be slightly less biased. \Cref{fig:balance_bias} shows that, compared to split scores for flattenings, split scores for subflattenings rose more slowly as split size increased. While \citeauthor{Allman2017} suggest that correction to this bias exists theoretically, they deemed it to be of little use in practice \cite{Allman2017}. In our view, the link between the rank properties of flattening/subflattening matrices and the split parsimony score suggests that a practical correction allowing comparison between splits of different sizes does not exist.

\begin{figure}[!htb]
    \centering
    \includegraphics[width=\textwidth/2]{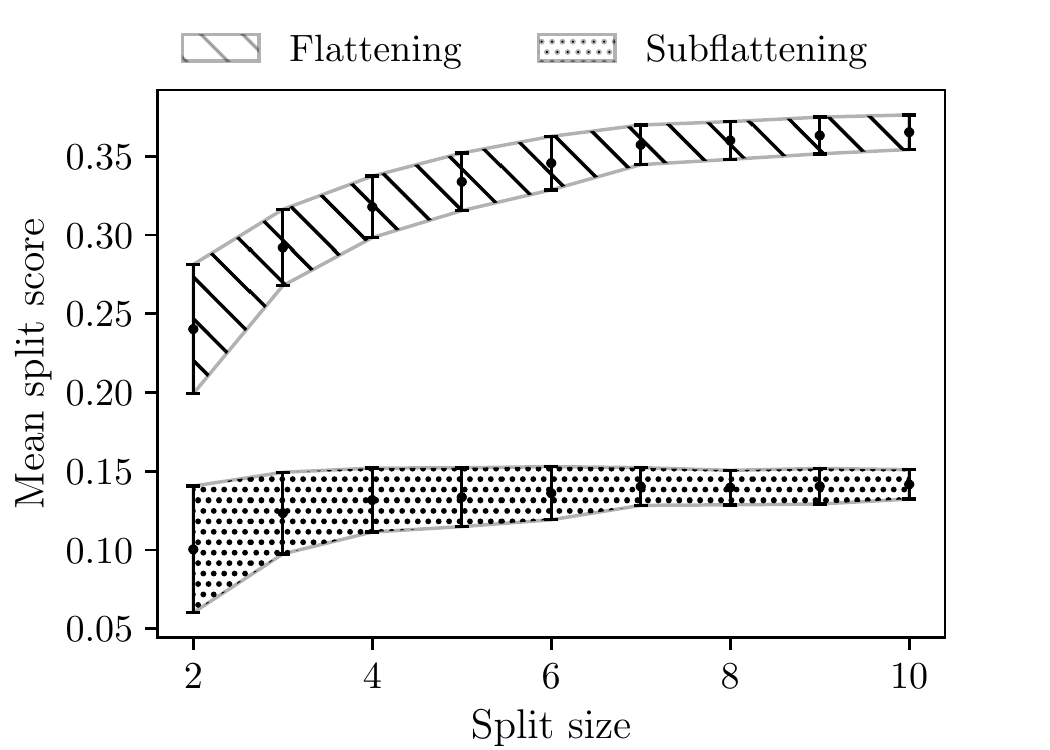}
    \hspace{-5mm}\includegraphics[width=\textwidth/2]{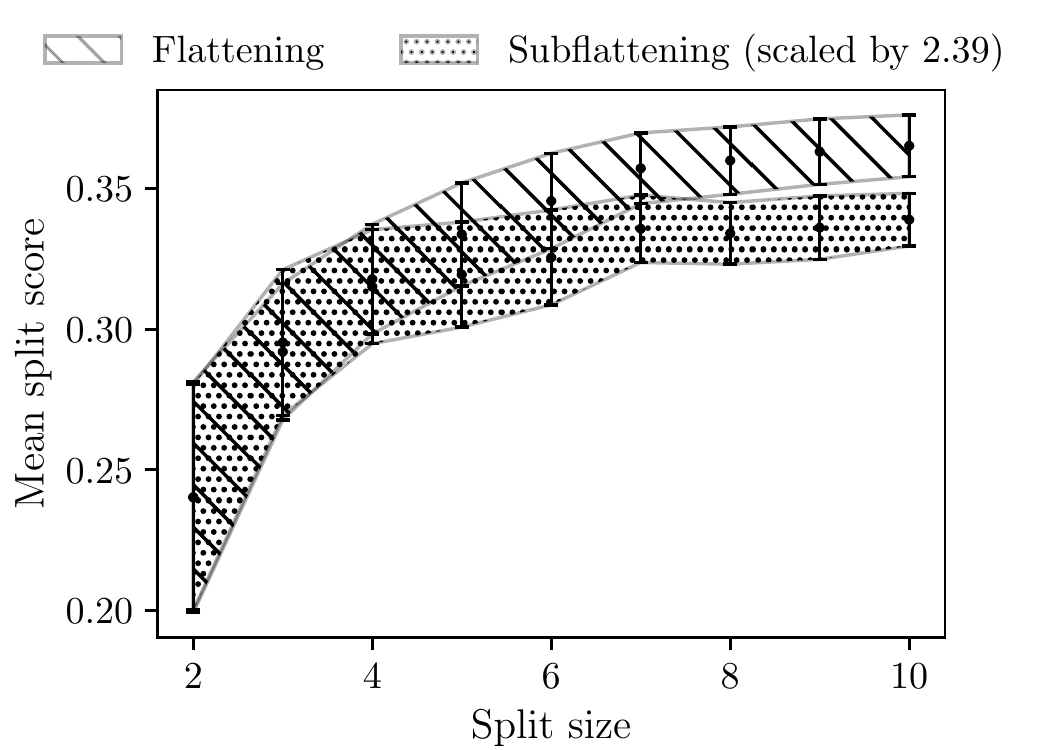}
    \caption{Split scores computed for a sample of 190 splits for each split size on the 20-taxon tree shown in \Cref{fig:tree20}, with branch lengths all set to 0.05, and sequence length set to 500bp. The plot on the right is the same as the left, but with all subflattening scores scaled so that the first points overlap. Error bars indicate sample standard deviation.}
    \label{fig:balance_bias}
\end{figure}

While we only simulated a single sequence alignment, repeated simulations yielded results which seemed to be stable. Future investigation into the effect of split balance could include a similar simulation study, but with a systematic variation of sequence length, branch lengths and/or tree topology.

\subsection{LBA bias analysis } \label{sec:lba_}
 
\begin{figure}[!htb]
    \centering
    \includegraphics[width=0.95\textwidth]{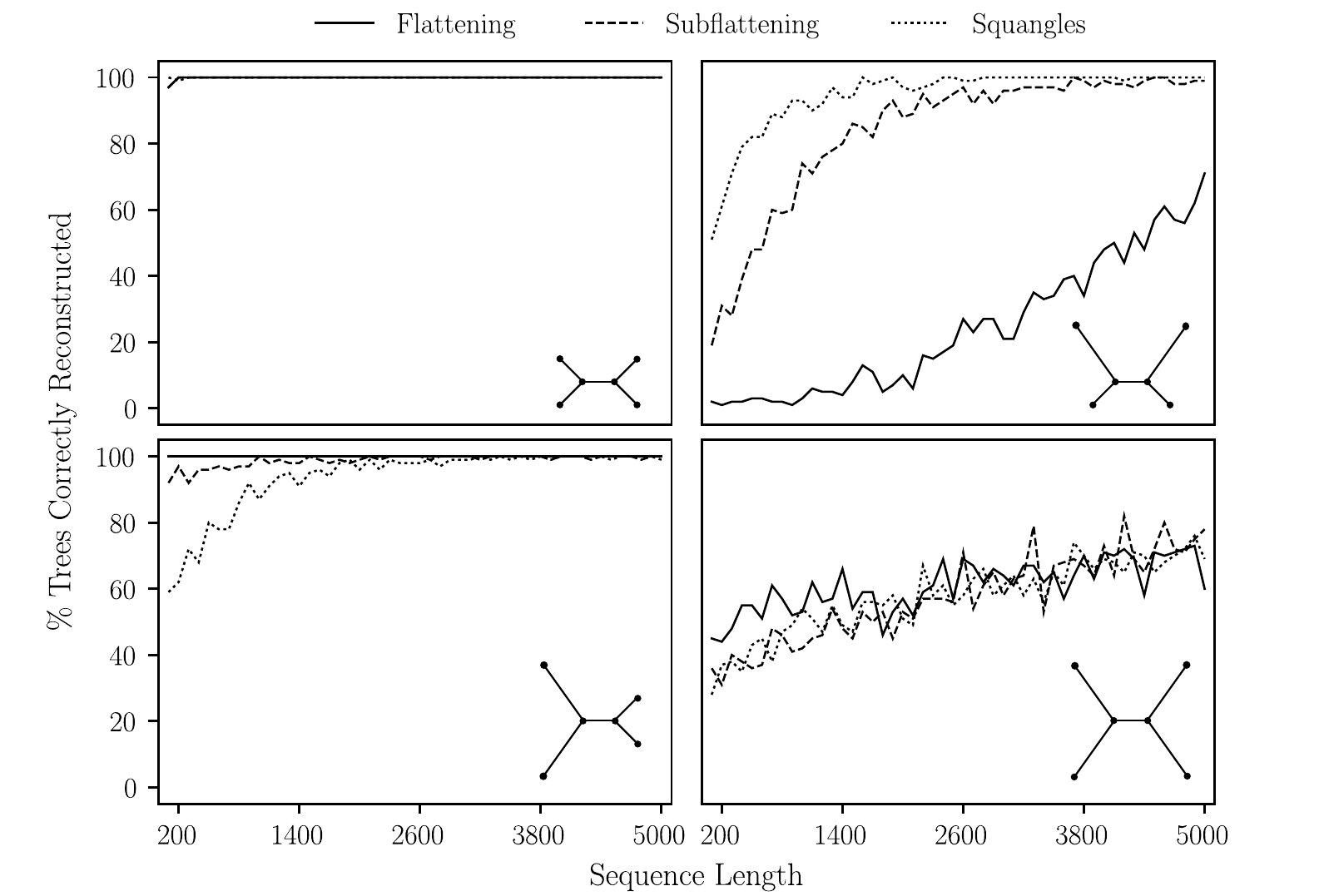}
    \caption{The percentages of times each method was able to choose the correct split from the simulated pattern counts. For each sequence length, we generated 100 different sequence alignments. For each of the four quartet trees, short edges have Jukes-Cantor length 0.05 and long edges have length 0.5.}
    \label{fig:lba_bias}
\end{figure}
    
The results of our simulations on quartet trees are shown in \Cref{fig:lba_bias}. As expected, we observe flattenings, subflattenings and squangles performing similarly on the trees in which all branches were either short or long, with worse over-all performance in latter case. Flattenings showed significantly worse performance on the top-right quartet---an indication that flattenings were incorrectly pairing the long edges together far more often than subflattenings were. Further, flattenings performed slightly better for the bottom left quartet, indicating that the flattenings are more confident in correctly joining the two long branches. These results might indicate that subflattenings and squangles are less impacted by long branch attraction bias when compared to flattenings. Comparing subflattenings to squangles, we saw the squangles correctly reconstruct more trees for the top-right quartet, and fewer trees for the bottom-left quartet.

\begin{figure}[!htb]
    \centering
    \includegraphics[width=0.95\textwidth]{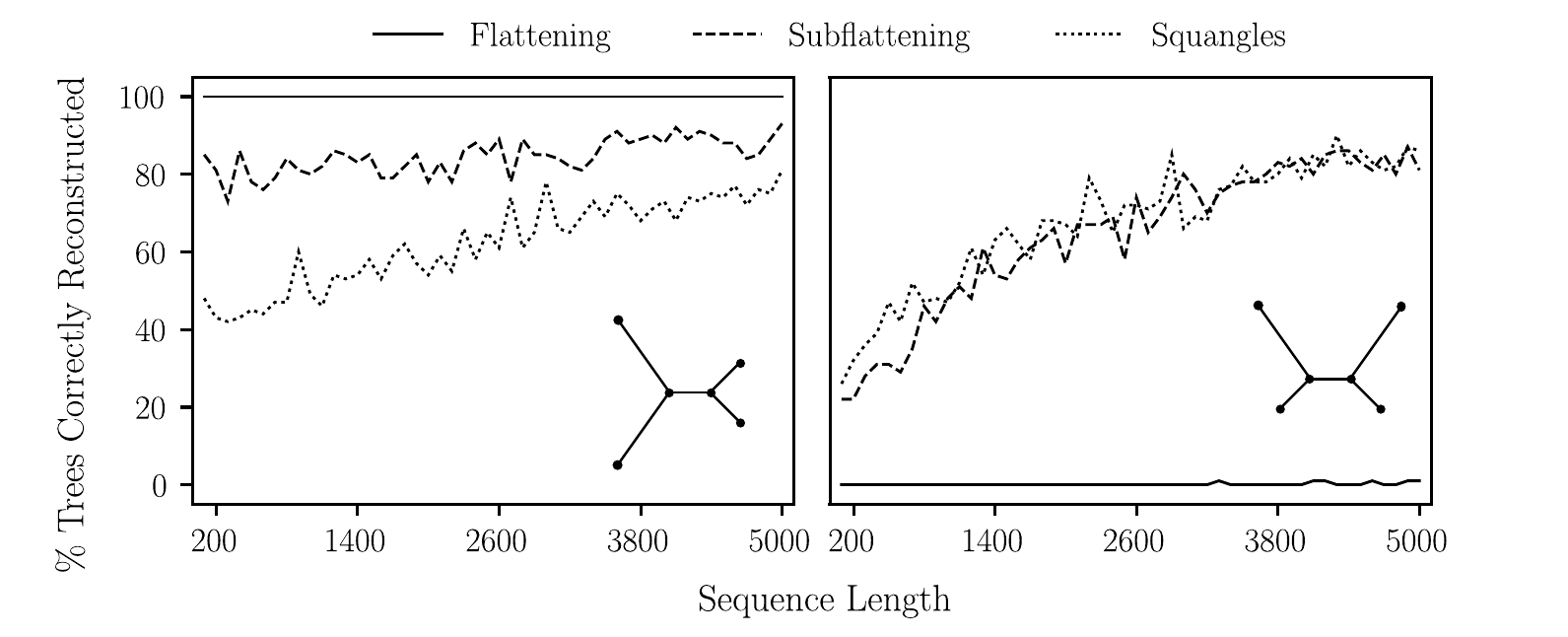}
    \caption{The percentages of times each method was able to choose the correct split from the simulated pattern counts. For each sequence length, we generated 100 different sequence alignments. For each of the four quartet trees, short edges have Jukes-Cantor length 0.05 and long edges have length 0.5.}
    \label{fig:lba_bias_longer}
\end{figure}

We repeated the simulations on the top-right and bottom-left quartets but with longer branches (short branch length set to 0.1 and long branch length set to 1.0), and the results are shown in \Cref{fig:lba_bias_longer}. We see that the difference between the flattening and subflattening performance when presented with two long branches was even more pronounced. We note that the long branch length of 1.0 means that substitutions along the long branches are more likely to occur than not, making these trees much more difficult to infer.

\begin{figure}[!htb]
    \centering
    \includegraphics[width=0.9\textwidth]{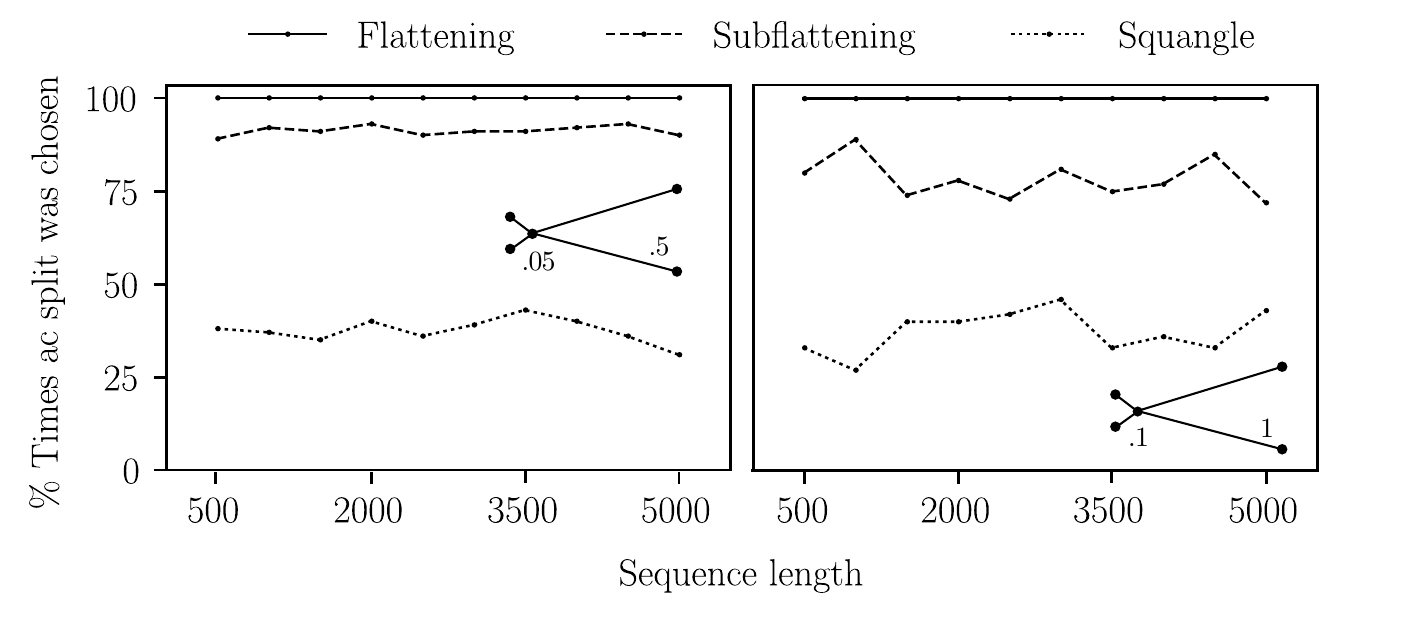}
    \caption{The percentages of times each method chose the split $ac|bd$ on the star tree on leaves $\{a,b,c,d\}$ with an internal edge length of zero, a branch length of 0.1 for leaves $b$ and $d$ and longer edges for leaves $a$ and $c$. For each sequence length, 100 sequence alignments were generated. Branch lengths are as shown.}
    \label{fig:lba_bias_star}
\end{figure}

\Cref{fig:lba_bias_star} shows the percentage of times flattenings and subflattenings correctly reconstructed the star tree with two long branches and an internal branch length of zero. While flattenings paired the two long edges almost every time, subflattenings did so less often. This may further indicate that subflattenings are less biased to long branch attraction, though not completely unbiased, since a totally unbiased method would pair the long edges approximately a third of the time. Increasing all four branch lengths (see the right plot in \Cref{fig:lba_bias_star}) seemed to increase the performance of the subflattenings, and \Cref{fig:lba_bias_star_increasing} seems to suggest that this effect is due to the increased length of the short branches.

\begin{figure}[!htb]
    \centering
    \includegraphics[width=0.95\textwidth]{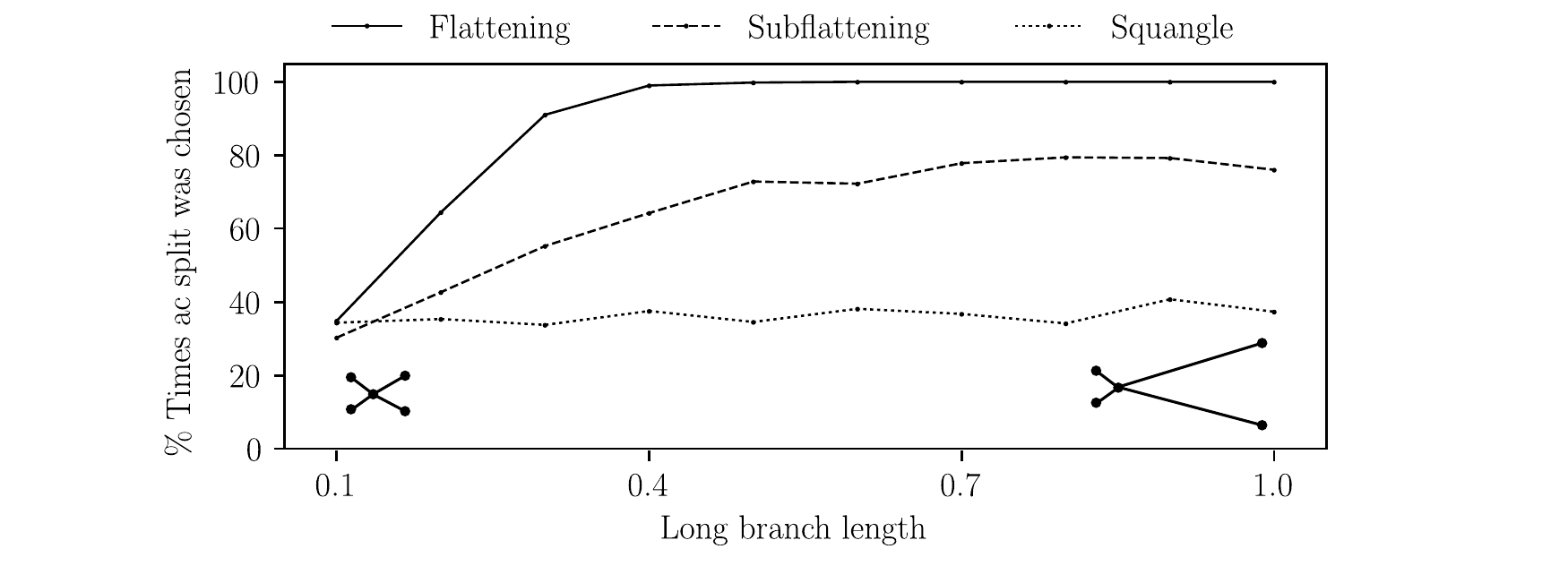}
    \caption{The percentages of times each method chose the split $ac|bd$ on the star tree on leaves $\{a,b,c,d\}$ with an internal edge length of zero, two short branches of length 0.1 for leaves $b$ and $d$, and varying branch lengths for leaves $a$ and $c$. For each increase of the long branch length, 100 sequence alignments of length 1000 were generated.}
    \label{fig:lba_bias_star_increasing}
\end{figure}

Interestingly, the squangles method appeared to be the least affected by long branch attraction bias in these simulations. A theoretical investigation is recommended in order to confirm and make sense of these observations.

\section{Discussion} \label{sec:discussion}
Flattenings and subflattenings, and more generally split and rank based tools, encompass some interesting algebraic and statistical ideas, and motivate methods for phylogenetic inference. We have presented results which show that split scores computed from subflattenings are comparable to those computed using flattenings, supporting the use of subflattenings as an alternative tool to flattenings for phylogenetic inference. We also give some evidence that subflattenings may be less impacted by certain biases, namely those arising from split size/balance and long branch attraction. We think this is worth exploring further, but note that some split size bias is likely unavoidable with methods relating to flattenings, due to the relationship between the flattening matrix rank and the parsimony score. It is our hope that this work will contribute to an improved understanding of rank-based methods for constructing phylogenetic trees. There are a number of other avenues available for further research in this area. While we have looked at reconstructing quartet trees in this paper, algorithms that utilise flattenings to reconstruct larger trees have been developed \cite{Eriksson2005} and can instead utilise subflattenings without needing to be adapted in any way. It would be beneficial to systematically compare tree reconstruction methods involving flattenings and subflattenings to various other approaches.

\addcontentsline{toc}{section}{References}
\small
\bibliographystyle{abbrvnat}
\bibliography{subflattenings}

\begin{thebibliography}{34}
\providecommand{\natexlab}[1]{#1}
\providecommand{\url}[1]{\texttt{#1}}
\expandafter\ifx\csname urlstyle\endcsname\relax
  \providecommand{\doi}[1]{doi: #1}\else
  \providecommand{\doi}{doi: \begingroup \urlstyle{rm}\Url}\fi

\bibitem[Allman and Rhodes(2004)]{Allman2004}
E.~S. Allman and J.~A. Rhodes.
\newblock {Quartets and parameter recovery for the general Markov model of
  sequence mutation}.
\newblock \emph{Applied Mathematics Research eXpress}, 2004\penalty0
  (4):\penalty0 107--131, July 2004.
\newblock ISSN 1687-1200.
\newblock \doi{10.1155/S1687120004020283}.

\bibitem[Allman and Rhodes(2007)]{Allman2007}
E.~S. Allman and J.~A. Rhodes.
\newblock {Phylogenetic invariants}.
\newblock In O.~Gascuel and M.~Steel, editors, \emph{Reconstructing Evolution:
  New Mathematical and Computational Advances}, chapter~19, pages 108--147.
  Oxford University Press, 2007.

\bibitem[Allman and Rhodes(2008)]{Allman2008}
E.~S. Allman and J.~A. Rhodes.
\newblock {Phylogenetic ideals and varieties for the general Markov model}.
\newblock \emph{Advances in Applied Mathematics}, 40\penalty0 (2):\penalty0
  127--148, 2008.
\newblock ISSN 0196-8858.
\newblock \doi{10.1016/j.aam.2006.10.002}.
\newblock URL
  \url{http://www.sciencedirect.com/science/article/pii/S0196885806002077}.

\bibitem[Allman et~al.(2017)Allman, Kubatko, and Rhodes]{Allman2017}
E.~S. Allman, L.~S. Kubatko, and J.~A. Rhodes.
\newblock {Split Scores: A Tool to Quantify Phylogenetic Signal in Genome-Scale
  Data}.
\newblock \emph{Systematic Biology}, 66\penalty0 (4):\penalty0 620--636, Jan.
  2017.
\newblock ISSN 1063-5157.
\newblock \doi{10.1093/sysbio/syw103}.

\bibitem[Cavender and Felsenstein(1987)]{Cavender1987}
J.~A. Cavender and J.~Felsenstein.
\newblock {Invariants of phylogenies in a simple case with discrete states}.
\newblock \emph{Journal of Classification}, 4\penalty0 (1):\penalty0 57--71,
  Mar. 1987.
\newblock ISSN 1432-1343.
\newblock \doi{10.1007/BF01890075}.

\bibitem[Charleston(2022)]{Charleston2021}
M.~Charleston.
\newblock Flatbush, 2022.
\newblock URL \url{https://github.com/mcharleston/Flatbush}.

\bibitem[Chifman and Kubatko(2014)]{Chifman2014}
J.~Chifman and L.~Kubatko.
\newblock {Quartet Inference from SNP Data Under the Coalescent Model}.
\newblock \emph{Bioinformatics}, 30\penalty0 (23):\penalty0 3317--3324, Aug.
  2014.
\newblock ISSN 1367-4803.
\newblock \doi{10.1093/bioinformatics/btu530}.

\bibitem[Chifman and Kubatko(2015)]{Chifman2015}
J.~Chifman and L.~Kubatko.
\newblock Identifiability of the unrooted species tree topology under the
  coalescent model with time-reversible substitution processes, site-specific
  rate variation, and invariable sites.
\newblock \emph{Journal of Theoretical Biology}, 374:\penalty0 35--47, jun
  2015.
\newblock ISSN 0022-5193.
\newblock URL
  \url{https://www.sciencedirect.com/science/article/pii/S0022519315001095}.

\bibitem[Eckart and Young(1936)]{Eckart1936}
C.~Eckart and G.~Young.
\newblock {The approximation of one matrix by another of lower rank}.
\newblock \emph{Psychometrika}, 1\penalty0 (3):\penalty0 211--218, Sept. 1936.
\newblock ISSN 1860-0980.
\newblock \doi{10.1007/BF02288367}.

\bibitem[Eriksson(2005)]{Eriksson2005}
N.~Eriksson.
\newblock {Tree Construction Using Singular Value Decomposition}.
\newblock In L.~Pachter and B.~Sturmfels, editors, \emph{Algebraic Statistics
  for Computational Biology}, chapter~19, pages 360--368. Berkeley, California,
  2005.

\bibitem[Felsenstein(1978)]{Felsenstein1978}
J.~Felsenstein.
\newblock {Cases in which Parsimony or Compatibility Methods Will be Positively
  Misleading}.
\newblock \emph{Systematic Zoology}, 27\penalty0 (4):\penalty0 401--410, 1978.
\newblock ISSN 0039-7989.
\newblock URL \url{http://www.jstor.org/stable/2412923}.

\bibitem[Fern{\'{a}}ndez-S{\'{a}}nchez and
  Casanellas(2015)]{FernanndezSanchez2015}
J.~Fern{\'{a}}ndez-S{\'{a}}nchez and M.~Casanellas.
\newblock {Invariant Versus Classical Quartet Inference When Evolution is
  Heterogeneous Across Sites and Lineages}.
\newblock \emph{Systematic Biology}, 65\penalty0 (2):\penalty0 280--291, Nov.
  2015.
\newblock ISSN 1063-5157.
\newblock \doi{10.1093/sysbio/syv086}.

\bibitem[Gaither and Kubatko(2016)]{Gaither2016}
J.~Gaither and L.~Kubatko.
\newblock Hypothesis tests for phylogenetic quartets, with applications to
  coalescent-based species tree inference.
\newblock \emph{Journal of Theoretical Biology}, 408:\penalty0 179--186, nov
  2016.
\newblock ISSN 0022-5193.
\newblock \doi{https://doi.org/10.1016/j.jtbi.2016.08.013}.
\newblock URL
  \url{https://www.sciencedirect.com/science/article/pii/S0022519316302508}.

\bibitem[Grunewald et~al.(2006)Grunewald, Forslund, Dress, and
  Moulton]{Grunewald2006}
S.~Grunewald, K.~Forslund, A.~Dress, and V.~Moulton.
\newblock {QNet}: An agglomerative method for the construction of phylogenetic
  networks from weighted quartets.
\newblock \emph{Molecular Biology and Evolution}, 24\penalty0 (2):\penalty0
  532--538, Nov. 2006.
\newblock \doi{10.1093/molbev/msl180}.

\bibitem[Guennebaud et~al.(2010)Guennebaud, Jacob, et~al.]{eigenweb}
G.~Guennebaud, B.~Jacob, et~al.
\newblock Eigen v3.
\newblock http://eigen.tuxfamily.org, 2010.

\bibitem[Hendy and Penny(1989)]{Hendy1989}
M.~D. Hendy and D.~Penny.
\newblock {A Framework for the Quantitative Study of Evolutionary Trees}.
\newblock \emph{Systematic Biology}, 38\penalty0 (4):\penalty0 297--309, Dec.
  1989.
\newblock ISSN 1063-5157.
\newblock \doi{10.2307/2992396}.

\bibitem[Hendy et~al.(1994)Hendy, Penny, and Steel]{hendy1994discrete}
M.~D. Hendy, D.~Penny, and M.~A. Steel.
\newblock A discrete fourier analysis for evolutionary trees.
\newblock \emph{Proceedings of the National Academy of Sciences}, 91\penalty0
  (8):\penalty0 3339--3343, 1994.

\bibitem[Holland et~al.(2012)Holland, Jarvis, and Sumner]{Holland2012}
B.~R. Holland, P.~D. Jarvis, and J.~G. Sumner.
\newblock {Low-Parameter Phylogenetic Inference Under the General Markov
  Model}.
\newblock \emph{Systematic Biology}, 62\penalty0 (1):\penalty0 78--92, 11 2012.
\newblock ISSN 1063-5157.
\newblock \doi{10.1093/sysbio/sys072}.
\newblock URL \url{https://doi.org/10.1093/sysbio/sys072}.

\bibitem[Huelsenbeck and Hillis(1993)]{Huelsenbeck1993}
J.~P. Huelsenbeck and D.~M. Hillis.
\newblock Success of phylogenetic methods in the four-taxon case.
\newblock \emph{Systematic Biology}, 42\penalty0 (3):\penalty0 247--264, Sept.
  1993.
\newblock \doi{10.1093/sysbio/42.3.247}.

\bibitem[Kimura(1981)]{kimura1981estimation}
M.~Kimura.
\newblock Estimation of evolutionary distances between homologous nucleotide
  sequences.
\newblock \emph{Proceedings of the National Academy of Sciences}, 78\penalty0
  (1):\penalty0 454--458, 1981.

\bibitem[Lake(1987)]{Lake1987}
J.~A. Lake.
\newblock {A rate-independent technique for analysis of nucleic acid sequences:
  evolutionary parsimony.}
\newblock \emph{Molecular Biology and Evolution}, 4\penalty0 (2):\penalty0
  167--191, Mar. 1987.
\newblock ISSN 0737-4038.
\newblock \doi{10.1093/oxfordjournals.molbev.a040433}.

\bibitem[Lockhart et~al.(1994)Lockhart, Steel, Hendy, and Penny]{Lockhart1994}
P.~J. Lockhart, M.~A. Steel, M.~D. Hendy, and D.~Penny.
\newblock Recovering evolutionary trees under a more realistic model of
  sequence evolution.
\newblock \emph{Molecular Biology and Evolution}, July 1994.
\newblock \doi{10.1093/oxfordjournals.molbev.a040136}.

\bibitem[Ranwez and Gascuel(2001)]{Ranwez2001}
V.~Ranwez and O.~Gascuel.
\newblock {Quartet-Based Phylogenetic Inference: Improvements and Limits}.
\newblock \emph{Molecular Biology and Evolution}, 18\penalty0 (6):\penalty0
  1103--1116, June 2001.
\newblock ISSN 0737-4038.
\newblock \doi{10.1093/oxfordjournals.molbev.a003881}.

\bibitem[Reaz et~al.(2014)Reaz, Bayzid, and Rahman]{Reaz2014}
R.~Reaz, M.~S. Bayzid, and M.~S. Rahman.
\newblock Accurate phylogenetic tree reconstruction from quartets: A heuristic
  approach.
\newblock \emph{PLOS ONE}, 9\penalty0 (8):\penalty0 1--13, Aug. 2014.
\newblock \doi{10.1371/journal.pone.0104008}.

\bibitem[Snir and Rao(2012)]{Snir2012}
S.~Snir and S.~Rao.
\newblock Quartet {MaxCut}: A fast algorithm for amalgamating quartet trees.
\newblock \emph{Molecular Phylogenetics and Evolution}, 62\penalty0
  (1):\penalty0 1--8, Jan. 2012.
\newblock \doi{10.1016/j.ympev.2011.06.021}.

\bibitem[Snyman et~al.(2021)Snyman, Fox, and Bryant]{Snyman2021}
J.~Snyman, C.~Fox, and D.~Bryant.
\newblock Parsimony and the rank of a flattening matrix.
\newblock 2021.

\bibitem[Steel(2016)]{Steel2016}
M.~Steel.
\newblock \emph{{Phylogeny---Discrete and Random Processes in Evolution}}.
\newblock David Marshal, 2016.
\newblock ISBN 978-1-611974-47-8.

\bibitem[Stevenson(2022)]{Stevenson2022}
J.~Stevenson.
\newblock {SplitP}, March 2022.
\newblock URL \url{https://github.com/js51/SplitP}.

\bibitem[Sumner(2017)]{Sumner2017}
J.~G. Sumner.
\newblock {Dimensional Reduction for the General Markov Model on Phylogenetic
  Trees}.
\newblock \emph{Bulletin of Mathematical Biology}, 79\penalty0 (3):\penalty0
  619--634, Mar. 2017.
\newblock ISSN 1522-9602.
\newblock \doi{10.1007/s11538-017-0249-6}.

\bibitem[Sumner and Jarvis(2005)]{sumner2005entanglement}
J.~G. Sumner and P.~D. Jarvis.
\newblock Entanglement invariants and phylogenetic branching.
\newblock \emph{Journal of mathematical biology}, 51\penalty0 (1):\penalty0
  18--36, 2005.

\bibitem[Sumner et~al.(2008)Sumner, Charleston, Jermiin, and
  Jarvis]{Sumner2008Markov}
J.~G. Sumner, M.~A. Charleston, L.~S. Jermiin, and P.~D. Jarvis.
\newblock Markov invariants, plethysms, and phylogenetics.
\newblock \emph{Journal of theoretical biology}, 253\penalty0 (3):\penalty0
  601--615, 2008.

\bibitem[Virtanen et~al.(2020)Virtanen, Gommers, Oliphant, Haberland, Reddy,
  Cournapeau, Burovski, Peterson, Weckesser, Bright, van~der Walt, Brett,
  Wilson, Millman, Mayorov, Nelson, Jones, Kern, Larson, Carey, Polat, Feng,
  Moore, VanderPlas, Laxalde, Perktold, Cimrman, Henriksen, Quintero, Harris,
  Archibald, Ribeiro, Pedregosa, van Mulbregt, and Contributors]{scipy}
P.~Virtanen, R.~Gommers, T.~E. Oliphant, M.~Haberland, T.~Reddy, D.~Cournapeau,
  E.~Burovski, P.~Peterson, W.~Weckesser, J.~Bright, S.~J. van~der Walt,
  M.~Brett, J.~Wilson, K.~J. Millman, N.~Mayorov, A.~R.~J. Nelson, E.~Jones,
  R.~Kern, E.~Larson, C.~J. Carey, {\. I}.~Polat, Y.~Feng, E.~W. Moore,
  J.~VanderPlas, D.~Laxalde, J.~Perktold, R.~Cimrman, I.~Henriksen, E.~A.
  Quintero, C.~R. Harris, A.~M. Archibald, A.~H. Ribeiro, F.~Pedregosa, P.~van
  Mulbregt, and S.~. Contributors.
\newblock {{SciPy} 1.0: Fundamental Algorithms for Scientific Computing in
  Python}.
\newblock \emph{Nature Methods}, 17:\penalty0 261--272, 2020.
\newblock \doi{10.1038/s41592-019-0686-2}.

\bibitem[Wen et~al.(2016)Wen, Yu, Hahn, and Nakhleh]{Wen2016}
D.~Wen, Y.~Yu, M.~W. Hahn, and L.~Nakhleh.
\newblock Data from: Reticulate evolutionary history and extensive
  introgression in mosquito species revealed by phylogenetic network analysis.
\newblock 2016.
\newblock \doi{https://doi.org/10.5061/dryad.tn47c}.

\bibitem[Willson(1999)]{Willson1999}
S.~J. Willson.
\newblock Building phylogenetic trees from quartets by using local
  inconsistency measures.
\newblock \emph{Molecular Biology and Evolution}, 16\penalty0 (5):\penalty0
  685--693, May 1999.
\newblock \doi{10.1093/oxfordjournals.molbev.a026151}.

\end{thebibliography}

\end{document}